\newtheorem{theorem}{Theorem}[section]
\newtheorem{lemma}{Lemma}[section]
\theoremstyle{Definition}
\newtheorem{definition}{Definition}[section]
\theoremstyle{remark}
\numberwithin{equation}{section}
\begin{document}

\begin{flushleft}
 {\bf\Large { Uncertainty Inequalities for 3D Octonionic-valued Signals Associated with Octonion Offset Linear Canonical Transform}}

\parindent=0mm \vspace{.2in}

{\bf{M. Younus Bhat$^{1},$ and Aamir H. Dar$^{2}$ }}
\end{flushleft}

{{\it $^{1}$ Department of  Mathematical Sciences,  Islamic University of Science and Technology Awantipora, Pulwama, Jammu and Kashmir 192122, India.E-mail: $\text{g gyounusg@gmail.com}$}}

{{\it $^{2}$ Department of  Mathematical Sciences,  Islamic University of Science and Technology Awantipora, Pulwama, Jammu and Kashmir 192122, India.E-mail: $\text{ahdkul740@gmail.com}$}}

\begin{quotation}
\noindent
{\footnotesize {\sc Abstract.} The octonion offset linear canonical transform ($\mathbb O-$OLCT) can be defined as a  time-shifted and frequency-modulated version of the octonion linear canonical
transform ($\mathbb O-$LCT),  a
more general framework of most existing  signal
processing tools. In this  paper, we first define the ($\mathbb O-$OLCT) and provide its closed-form representation. Based on this fact, we study some fundamental properties of proposed transform including inversion formula, norm split and energy conservation. The crux of the paper lies in the generalization of  several well known  uncertainty relations for the ($\mathbb O-$OLCT)  that  include Pitt’s inequality, logarithmic uncertainty inequality, Hausdorff-Young inequality
 and local uncertainty inequalities.\\

{ Keywords:} Quaternion offset linear canonical transform(QOLCT); Octonion; Octonion Fourier transform$(\mathbb O-$OFT); Octonion offset linear canonical transform($\mathbb O-$OLCT);  Uncertainty principle.\\

\noindent
\textit{2000 Mathematics subject classification: } 42B10; 43A32; 94A12; 42A38;  30G30.}
\end{quotation}
\section{ \bf Introduction}
\noindent
The hyper-complex Fourier transform(FT)  is of the interest in the present era. It treats multi-channel signals as an algebraic whole without losing the spectral relations. Presently, many hyper-complex FTs exists in literature which are defined by different approaches, see \cite{5a,6a}. The developing interest in hyper-complex FTs including applications in  watermarking, color image processing, image filtering,  pattern recognition and edge detection \cite{1b}-\cite{6b}. Among the various hyper-complex FTs, the most basic ones are  the quaternion Fourier transforms(QFTs). QFTs are most widely studied in recent years because of its wide applications in optis and signal processing. QFT\cite{8b} is very useful in Cayley-Dickson algebra of order 4(Quaternions) as it is a substitute to the two-dimensional complex Fourier transform (CFT). Various properties and applications of the QFT were established in \cite{9a}-\cite{12a}. The generalization of quaternion Fourier transform (QFT)is quaternion linear canonical transform (QLCT), which is more effective signal processing tool than QFT due to its extra parameters, see\cite{13a,14a,15a,16a,17a,18a,WLCT}.\\
Later, the quaternion linear canonical transform (QLCT) with four parameters  has been generalized to a six parameter transform  known as quaternion offset linear canonical transform (QOLCT). Due to the time shifting  and frequency modulation parameters, the QOLCT has gained more flexibility over classical QLCT.  Hence has found wide applications in image and signal processing, see \cite{own1,own2,own3,gen}.\\
On the other hand  the Cayley-Dickson algebra of order 8 is known as octonion algebra  which deserve special attention  in the  hyper-complex signal processing. The   octonion Fourier transform ($\mathbb O-$FT) was proposed by Hahn and Snopek in 2011\cite{hn}. From then $\mathbb O-$FT is becoming the hot area of research in modern signal processing community. Some properties and uncertainty relations and applications associated with $\mathbb O-$FT have been studied, see\cite{w1,w2,w3,w4}.In 2021 Gao and Li \cite{li} proposed octonion linear canonical transform($\mathbb O-$LCT) as a generalization of  $\mathbb O-$FT by substituting the Fourier kernel with the LCT kernel. They established some vital properties like inversion formula, isometry, Riemann-Lebesgue lemma and proved Heisenberg's and Donoho-Stark's  uncertainty principles. The generalization of $\mathbb O-$FT to other transforms viz linear canonical transform, offset linear canonical transform, Stockwell transform, Quadratic phase transform etc. is still in its infancy.

So motivated and inspired  by this, we shall propose the novel octonion offset linear canonical transform ($\mathbb O-$OLCT)
 which have never been proposed up to date, therefore it is worthwhile to rigorously study the Octonion offset linear canonical transform $\mathbb O-$OLCT
 which can be productive for signal processing theory and applications.\\
 The highlights of the paper are pointed out below:\\
 \begin{itemize}
 \item  To introduce a novel integral transform coined as the octonion offset linear canonical transform ($\mathbb O-$OLCT).\\
 \item To study the fundamental properties of the proposed transform, including the closed-form representation, norm split into quaternions, inversion formula and energy conservation.\\
     \item To formulate several classes of uncertainty inequalities, such as the Pitt’s inequality, logarithmic uncertainty inequality, Hausdorff-Young inequality
 and local uncertainty inequality associated with the octonion offset linear canonical transform ($\mathbb O-$OLCT).\\
 \end{itemize}
      The rest of the paper is organized as follows: In Section 2, some general definitions and basic properties of octonions  are summarized. The definition and the properties of the $\mathbb O-$OLCT are studied in Section 3. In Section 3,
we develop a series of uncertainty inequalities such as the Pitt’s inequality, logarithmic uncertainty inequality, Hausdorff-Young inequality
 and local uncertainty inequality   associated with the $\mathbb O-$OLCT. Finally, a conclusion is extracted in Section 5.
\section{\bf Preliminaries}
\label{sec 2}
In this section,  we collect some basic facts on the octonion algebra and the offset linear canonical transform(OLCT), which will be needed throughout the paper.

\subsection{\bf Octonion algebra} \  \\
The octonion algebra denoted by $\mathbb O,$ \cite{23} is generated by the eighth-order Cayley-Dickson construction. According to  His
construction, a hypercomplex number  $o\in\mathbb O$ is an ordered pair of quaternions $q_0,q_1\in\mathbb H$
\begin{eqnarray}\label{o1}
\nonumber o&=&(q_0,q_1)\\
\nonumber&=&((z_0,z_1),(z_2,z_3))\\
\nonumber&=&q_0+q_1.e_4\\
\nonumber&=&(z_0+z_1.e_2)+(z_2+z_3.e_2).e_4\\
\end{eqnarray}
which has equivalent form
\begin{equation}\label{o2}
o=s_o+\sum_{i=1}^7s_ie_i=s_0+s_1e_1+s_2e_2+s_3e_3+s_4e_4+s_5e_5+s_6e_6+s_7e_7
\end{equation}
that is $o$ is a hypercomplex number defined by eight real numbers $s_i,i=0,1,\dots,7$ and seven imaginary units $e_i$ where $i=1,2,\dots,7.$ The octonion algebra is non-commutative and non-associative algebra. The multiplication
of imaginary units in the Cayley-Dickson algebra of octonions are presented in Table I or in diagram called Fano scheme, shown in Figure 1.\\
%\begin{array}{|c|c|c|c|c|c|c|c|c|}
%          \hline
%          % after \\: \hline or \cline{col1-col2} \cline{col3-col4} ...
%          \cdot & 1 & e_1 & e_2 & e_3 & e_4 & e_5 & e_6 & e_7
%         %  &  &  &  &  &  &  &  &  \\
%%           &  &  &  &  &  &  &  &  \\
%%           &  &  &  &  &  &  &  &  \\
%%           &  &  &  &  &  &  &  &  \\
%%           &  &  &  &  &  &  &  &  \\
%%           &  &  &  &  &  &  &  &  \\
%%           &  &  &  &  &  &  &  &  \\
%%           &  &  &  &  &  &  &  &  \\
%          \hline
%        \end{array}
\begin{center} Table I \end{center}
\begin{center}{\small Multiplication Rules in Octonion Algebra.} \end{center}
\begin{equation*}\label{table}
\begin{array}{|c|c|c|c|c|c|c|c|c|}
\hline
  \cdot & 1 & e_1 & e_2 & e_3 & e_4 & e_5 & e_6 & e_7 \\
\hline
  1 & 1 & e_1 & e_2 & e_3 & e_4 & e_5 & e_6 & e_7 \\
\hline
  e_1 & e_1 & -1 & e_3 & -e_2 & e_5 & -e_4& -e_7& e_6 \\
\hline
  e_2 & e_2 & -e_3 & -1 & e_1 & e_6 & e_7 & -e_4 & -e_5 \\
\hline
  e_3 & e_3 & e_2 & -e_1 & -1 & e_7 & -e_6 & e_5 & -e_4 \\
\hline
  e_4 & e_4 & -e_5 & -e_6 & -e_7 & -1 & e_1 & e_2 & e_3 \\
  \hline
  e_5 & e_5 & e_4 & -e_7 & e_6 & -e_1 & -1 & -e_3 & e_2\\
\hline
  e_6 & e_6 & e_7 & e_4 & -e_5 & -e_2 & e_3 & -1 & -e_1 \\
  \hline
  e_7 & e_7 & -e_6 & e_5& e_4 & -e_3 & -e_2 & e_1 & -1 \\
  \hline
%   &  &  &  &  &  &  &  & 

\end{array}\end{equation*}
The conjugate of an octonion is defined as
\begin{equation}\label{o3}
\overline{o}=s_0-s_1e_1-s_2e_2-s_3e_3-s_4e_4-s_5e_5-s_6e_6-s_7e_7
 \end{equation}
 Therefore norm is defined by $|o|=\sqrt{o\overline{o}}$ and $|o|^2=\sum_{i=o}^7s_i.$ Also $|o_1o_2|=|o_1||o_2|,\forall o_1,o_2\in \mathbb O.$

From (\ref{o1}) it is evident that every $o\in \mathbb O$ can be represented in quaternion form as
\begin{equation}\label{o4}
o=a+be_4
\end{equation}
where $a=s_0+s_1e_1+s_2e_2+s_3e_3$ and $b=s_4+s_5e_1+s_6e_2+s_7e_3$ are both quaternions. By direct verification we have following lemma.
\begin{lemma}\label{lem1}Let $a,b\in \mathbb H,$ then\\
(1)\quad $e_4a=\overline ae_4;$ \qquad\quad(2)\quad$e_4(ae_4)=-\overline a;$\qquad\quad$(3)\quad(ae_4)e_4=-a;$\\
(4)\quad$a(be_4)=(ba)e_4$;\quad\quad(5)\quad$(ae_4)b=(a\overline b)e_4;$\quad\quad(6)\quad$(ae_4)(be_4)=-\overline b a.$
\end{lemma}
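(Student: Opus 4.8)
The plan is to derive all six identities from the single Cayley--Dickson doubling law that underlies \eqref{o1} and \eqref{o4}. Writing each octonion as an ordered pair of quaternions via $o = a + b e_4 \leftrightarrow (a,b)$, one has the embedding $a \leftrightarrow (a,0)$ for $a\in\mathbb H$, the distinguished unit $e_4 \leftrightarrow (0,1)$, and hence $b e_4 \leftrightarrow (0,b)$. Under this correspondence the octonion product takes the form
\[
(a,b)\,(c,d) \;=\; \bigl(\,ac - \overline{d}\,b,\ \ d\,a + b\,\overline{c}\,\bigr), \qquad a,b,c,d \in \mathbb H,
\]
and this formula is the starting point I would adopt.

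Granting this law, each assertion of the lemma becomes a one-line substitution. For example, $(1)$ follows from $e_4 a \leftrightarrow (0,1)(a,0) = (0,\overline a) \leftrightarrow \overline a\, e_4$, and $(6)$ follows from $(a e_4)(b e_4) \leftrightarrow (0,a)(0,b) = (-\,\overline b\, a,\ 0) \leftrightarrow -\,\overline b\, a$; the remaining identities $(2)$--$(5)$ are obtained in precisely the same manner, using only the elementary quaternion facts $\overline{1}=1$, $\overline{0}=0$ and $\overline{\overline a}=a$. I would present the proof simply as the resulting short list of six computations.

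The only point genuinely requiring care --- and the only place where the argument could go astray --- is to make sure the doubling formula above is the one consistent with the multiplication rules of Table~I, since several inequivalent sign and order conventions for the Cayley--Dickson construction appear in the literature. I would therefore open the proof by testing compatibility on a few generators, for instance $e_4^{\,2}=-1$, $e_1 e_6 = -e_7$ and $e_6 e_1 = e_7$ (reading off $e_5 \leftrightarrow (0,e_1)$, $e_6 \leftrightarrow (0,e_2)$, $e_7 \leftrightarrow (0,e_3)$ from \eqref{o2}), which pins down exactly the formula stated above. A convention-free alternative would be to expand $a$ and $b$ in the real basis $1,e_1,e_2,e_3$ and verify each identity coefficient by coefficient straight from Table~I; this is legitimate but far more tedious, so the pair-of-quaternions computation is the route I would actually carry out.
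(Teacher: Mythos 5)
Your proposal is correct: the doubling law $(a,b)(c,d)=(ac-\overline{d}b,\ da+b\overline{c})$ you adopt is the one compatible with Table~I (as your spot checks on $e_4^2$, $e_1e_6$, $e_6e_1$ confirm), and each of the six identities does follow from it by the one-line substitutions you describe. The paper offers no proof beyond the phrase ``by direct verification,'' and your Cayley--Dickson pair-of-quaternions computation is exactly that verification carried out systematically, with the added (and worthwhile) care of pinning down the sign convention before using it.
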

It is clear from above Lemma that, for an octonion $a+be_4, a,b \in \mathbb H,$ we have \\
\begin{equation}\label{o5} \overline{a+be_4}=\overline{a}-be_4\end{equation}
and \begin{equation}\label{o6} |a+be_4|^2=|a|^2+|b|^2.\end{equation}

An octonion-valued function $f:\mathbb R^3\longrightarrow\mathbb O$ has following explicit form
\begin{eqnarray}\label{ofun}
\nonumber f(x)&=&f_0+f_1(x)e_1+f_2(x)e_2+f_3(x)e_3+f_4(x)e_4+f_5(x)e_5+f_6(x)e_6+f_7(x)e_7\\
\nonumber&=&g(x)+h(x)e_4\\
\end{eqnarray}

where each $f_i(x)$ is a real vaiued functions, $g,h \in \mathbb H$ and $x=(x_1,x_2,x_3)\in \mathbb R^3.$
For  each octonion-valued  function $f(x)$ over $\mathbb R^3$ and $1\le p<\infty,$ the $L^p-$norm of $f$ is defined by
\begin{equation}\label{lpnorm}
\|f\|^p_p=\int_{\mathbb R^3}|f(x)|^pdx.
\end{equation}
And for $p=\infty$, then the $L^\infty$-norm is defined by
\begin{equation}\label{linfty}
\|f\|_\infty=esssup_{x\in\mathbb R^3}|f(x)|.
\end{equation}

\subsection{Offset linear canonical transform}\ \\
The  offset linear canonical transform(OLCT)\cite{23m} of any function $f:\mathbb R\longrightarrow \mathbb O$ with respect to the  matrix parameter $A=(a,b,c,d,e,\tau,\eta)$ is defined as

\begin{equation}\label{dOLCT}
\mathcal O_{A}\big[f(x)\big] w)= \int_{\mathbb R} f(x)K_{A}^i(x,w)dx.
\end{equation}

with
\begin{equation}\label{eqnk}
K_{A}(x,w)=\dfrac{1}{\sqrt{2\pi |b|}} e^{\frac{{i}}{2b}\big[ax^2-2x(w-\tau)-2w(d\tau-b\eta)+d(w^2+\tau^2)-\frac{\pi}{2}\big]},b\neq0\,
\end{equation}
where $e^{\frac{-i\pi}{4}}$ is the  polar form of $\frac{1}{\sqrt i}.$\\
And the Offset linear canonical transform in quaternion\cite{m18} setting is given as:\\

 Let
	$A_s=\begin{bmatrix}
	a_s&b_s|&\tau_s\\
	c_s&d_s|&\eta_s
	\end{bmatrix}\in \mathbb{R}^{2\times 2}$ be a matrix parameter satisfying ${\rm det}(A_s)=1$, for $s=1,2$. Then  QLCT of signal $f\in L^2\left( \mathbb{R}^2,\mathbb{H}\right)$ is defined by
	\begin{equation}
	\label{dQLCT}
	\mathcal{O}_{i,j}^{A_1,A_2}[f]({w})=
\int_{\mathbb{R}^2}f({x})K_{A_1}^{{i}}(x_1,w_1)K_{A_2}^{{j}}(x_2,w_2)\rm{d}{x},
	\end{equation}	
	where ${w}=(w_1,w_2),x=(x_1,x_2)\in \mathbb{R}^{2}$ , and  the kernel signals $K_{A_1}^{{i}}(x_1,w_1)$, $K_{A_2}^{{j}}(x_2,w_2)$ are respectively given by
	\begin{equation}\label{eqn k1}
K_{A_1}^{{i}}(x_1,w_1)=\dfrac{1}{\sqrt{2\pi b_1{i}}} \,e^{\frac{{i}}{2b_1}\big[a_1x_1^2-2x_1(w_1-\tau_1)-2w_1(d_1\tau_1-b_1\eta_1)+d_1(w_1^2+\tau_1^2)\big]},b_1\neq0\,
\end{equation}

\begin{equation}\label{eqn k2}
K_{A_2}^{{j}}(x_2,w_2)= \dfrac{1}{\sqrt{2\pi b_2{j}}} \,e^{\frac{{j}}{2b_2}\big[a_2x_2^2-2x_2(w_2-\tau_2)-2w_2(d_2\tau_2-b_2\eta_2)+d_2(w_2^2+\tau_2^2)\big]},\,b_2\neq0\,
\end{equation}
\section{\bf Octonionic offset linear canonical transform}\label{sec 3}
 In this section we shall formally introduce the notion of proposed transform "The Octonion Offset Linear Canonical Transform($\mathbb O-$OLCT) and study its important properties like closed form representation, inversion formula, split of norm and  energy conservation. Prior to establishing the fundamental properties for the proposed transform, we shall revisit the definitions of the octonion Fourier transform($\mathbb O-$FT)\cite{1doft} and the octonion linear canonical transform($\mathbb O-$LCT)\cite{li}.
 Lets begin with definition of $\mathbb O-$FT.
\subsection{Octonion Fourier transform}\ \\
 Let   $\mu_i,i=1,2\dots,7$  denote the imaginary units in Cayley-Dickson algebra of octonions, then for an octonion-valued function $f\in L^1(\mathbb R^3,\mathbb O)$ the one dimensional  $\mathbb O-$FT \cite{1doft} is given by
 \begin{equation}\label{1doft}
 \mathcal F_{\mu_4}\{f\}(w)=\int_{\mathbb R}f(x)e^{-\mu_42\pi x w}dx,
 \end{equation}
with inversion
 \begin{eqnarray}
 \nonumber f(x)&=&\mathcal F_{\mu_4}^{-1}\{\mathcal F_{\mu_4}\{f\}\}(x)\\
 \label{1doftinverse}&=&\int_{\mathbb R} \mathcal F_{\mu_4}\{f\}(w)e^{\mu_42\pi x w}dx,
 \end{eqnarray}
 And for octonion valued function $f\in L^1(\mathbb R^3,\mathbb O)\cap L^2(\mathbb R^3,\mathbb O)$, the three dimensional  $\mathbb O-$FT \cite{20,24} is defined as
 \begin{equation}\label{3doft}
 \mathcal F_{\mu_1,\mu_2,\mu_4}\{f\}(w)=\int_{\mathbb R^3}f(x)e^{-\mu_12\pi x_1 w_1}e^{-\mu_22\pi x_2 w_2}e^{-\mu_42\pi x_3 w_3}dx,
 \end{equation}
 with inversion
 \begin{eqnarray}
 \nonumber f(x)&=&\mathcal F^{-1}_{\mu_1,\mu_2,\mu_4}\{\mathcal F_{\mu_1,\mu_2,\mu_4}\{f\}\}(x)\\
\label{3doft inverse} &=&\int_{\mathbb R^3}\mathcal F_{\mu_1,\mu_2,\mu_4}\{f\}(w)e^{-\mu_12\pi x_1 w_1}e^{-\mu_22\pi x_2 w_2}e^{-\mu_42\pi x_3 w_3}dw,
 \end{eqnarray}
 where $w=(w_1,w_2,w_3),\quad x=(x_1,x_2,x_3)\in \mathbb R^3.$\\
 The multiplication in the above integrals is done from left to right as the octonion is non-
associative. And the order of imaginary units in (\ref{3doft}) is not accidental, see\cite{4 the oft}. The  $\mathbb O-$FT of 3D octonion-valud signals follows the multiplication rules of Table-I(octonion algebra) because the octonion-valued 3D signals has octonion structure.
  \subsection{Octonion linear canonical transform}\ \\
 In 2021 Gao,W.B and Li,B.Z \cite{li} introduced linear canonical transform in octonion setting they called it the octonion linear canonical transform ($\mathbb O-$LCT) and defined it as\\
 Let $f\in L^1(\mathbb R^3,\mathbb O),$ then  the one dimensional  $\mathbb O-$LCT with respect to the uni-modular  matrix $A=(a,b,c,d)$ is given by
\begin{equation}\label{onedOLCT}
\mathcal L^{A}_{\mu_4}\{f(x)\} w)= \int_{\mathbb R} f(x)K_{A}^{\mu_4}(x,w)dx,
\end{equation}
where
 \begin{equation*}\label{eqnolctk}
K^{\mu_4}_{A}(x,w)=\dfrac{1}{\sqrt{2\pi |b|}} e^{\frac{{\mu_4}}{2b}\big[ax^2-2xw-+dw^2-\frac{\pi}{2}\big]},\quad b\neq0\,
\end{equation*}
 with inversion
 \begin{eqnarray}
  f(x)=\int_{\mathbb R}\mathcal L^{A}_{\mu_4}K_{A}^{-\mu_4}(x,w)dx,
 \end{eqnarray}
 where $K_{A}^{-\mu_4}(x,w)=K_{A^{-1}}^{\mu_4}(w,x)$ and $A^{-1}=(d,-b,-c,a).$\\

  And for octonion valued function $f\in L^1(\mathbb R^3,\mathbb O)\cap L^2(\mathbb R^3,\mathbb O)$, the three dimensional  $\mathbb O-$LCT with respect to the   matrix parameter $A_k=(a_k,b_k,c_k,d_k), $ satisfying $det(A_k)=,\quad k=1,2,3$ is defined as
 \begin{equation}\label{threedOLCT}
\mathcal L^{A_1,A_2,A_3}_{\mu_1,\mu_2,\mu_4}\{f\}(w)=\int_{\mathbb R^3}f(x) K^{\mu_1}_{A_1}(x_1,w_1)K^{\mu_2}_{A_1}(x_2,w_2)K^{\mu_4}_{A_3}(x_3,w_3)dx
\end{equation}
where $x=(x_{1},x_{2},x_3),\, w=(w_{1},w_{2},w_3),$ and  $K_{A_1}^{\mu_1}(x_1,w_1)$ , $K_{A_2}^{\mu_2}(x_2,w_2)$ and $K_{A_3}^{\mu_4}(x_3,w_3)$ are  kernel signals  given by
\begin{equation*}
K_{A_1}^{\mu_1}(x_1,w_1)=\dfrac{1}{\sqrt{2\pi |b_1|}} e^{\frac{{\mu_1}}{2b_1}\big[a_1x^2_1-2x_1w_1+d_1w^2_1-\frac{\pi}{2}\big]},\quad b_1\neq0\,
\end{equation*}
 \begin{equation*}
K_{A_2}^{\mu_2}(x_2,w_2)==\dfrac{1}{\sqrt{2\pi |b_2|}} e^{\frac{{\mu_2}}{2b_2}\big[a_2x^2_2-2x_2w_2+d_2w^2_2-\frac{\pi}{2}\big]},\quad b_2\neq0\,
\end{equation*}
and
 \begin{equation*}
K_{A_3}^{\mu_4}(x_3,w_3)=\dfrac{1}{\sqrt{2\pi |b_3|}} e^{\frac{{\mu_4}}{2b_3}\big[a_3x^2_3-2x_3w_3+d_3w^2_3-\frac{\pi}{2}\big]},\quad b_3\neq0 .
\end{equation*}
Now we are in a position to define octonion offset linear canonical transform ($\mathbb O-$OLCT).\\\\
According to the one dimensional octonion Fourier transform ($\mathbb O-$FT)(\ref{1doft}) and the  one  dimensional octonion linear canonical transform ($\mathbb O-$OLCT) (\ref{onedOLCT}) , we can obtain the definition of  one dimensional octonion offset linear canonical transform($\mathbb O-$OLCT).
\begin{definition}[One dimensional $\mathbb O-$OLCT]\label{d1oolct}
Let $f\in L^1(\mathbb R,\mathbb O),$ then one dimensional $\mathbb O-$OLCT with respect a uni-modular matrix parameter $A=(a,b,c,d,e,\tau,\eta)$ is defined as follows:

\begin{equation}\label{dOLCT}
\mathcal O^{A}_{\mu_4}\{f(x)\} w)= \int_{\mathbb R} f(x)K_{A}^{\mu_4}(x,w)dx.
\end{equation}
where

\begin{equation}\label{eqnk}
K^{\mu_4}_{A}(x,w)=\dfrac{1}{\sqrt{2\pi |b|}} e^{\frac{{\mu_4}}{2b}\big[ax^2-2x(w-\tau)-2w(d\tau-b\eta)+d(w^2+\tau^2)-\frac{\pi}{2}\big]},\quad b\neq0\,
\end{equation}
\end{definition}

The following lemma gives the relationship of one dimensional $\mathbb O-$OLCT and one dimensional $\mathbb O-$FT  of an octonion-valued signals.
\begin{lemma}\label{lemrelation}The one dimensional  $\mathbb O-$OLCT  of a signal $f\in L^1(\mathbb R,\mathbb O)$ can be reduced to one dimensional $\mathbb O-$FT (\ref{1doft}) as
\begin{equation}\label{relation}
\mathcal O^{A}_{\mu_4}\{f(x)\} w)=\frac{1}{\sqrt{2\pi|b|}}\mathcal F_{\mu_4}\left\{f(x)e^{[\frac{a}{2b}x^2+\frac{1}{b}x\tau]}\right\}\left(\frac{w}{2\pi|b|}\right)e^{\mu_4[\frac{d}{2b}(w^2+\tau^2)+\frac{w}{b}(b\eta-d\tau)-\frac{\pi}{4}]}
\end{equation}
where  $b\ne 0$ and
\begin{equation}
\mathcal F_{\mu_4}\{f(x)\}(w)=\int_{\mathbb R}f(x)e^{-\mu_4 2\pi x w}dx
\end{equation}
represents the one dimensional $\mathbb O-$FT  of an octonion-valued signal $f(x)$.
\end{lemma}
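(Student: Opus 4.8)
\smallskip
\noindent\textit{Sketch of the proof.} The plan is to insert the explicit kernel $(\ref{eqnk})$ into the defining integral of Definition $\ref{d1oolct}$ and to split the quadratic phase into three blocks according to their dependence on the integration variable $x$: a pure chirp in $x$, the bilinear term that will serve as the $\mathbb O$-FT kernel, and the terms depending on $w$ (and $\tau,\eta$) alone. Writing out the exponent of $K_A^{\mu_4}(x,w)$ and regrouping gives
$$\frac{\mu_4}{2b}\Big[ax^2-2x(w-\tau)-2w(d\tau-b\eta)+d(w^2+\tau^2)-\tfrac{\pi}{2}\Big]=\mu_4\Big[\tfrac{a}{2b}x^2+\tfrac{1}{b}x\tau\Big]\;-\;\mu_4\,2\pi x\,\tfrac{w}{2\pi|b|}\;+\;\mu_4\Big[\tfrac{d}{2b}(w^2+\tau^2)+\tfrac{w}{b}(b\eta-d\tau)-\tfrac{\pi}{4}\Big],$$
so that $K_A^{\mu_4}(x,w)=\frac{1}{\sqrt{2\pi|b|}}\,e^{\mu_4[\frac{a}{2b}x^2+\frac{1}{b}x\tau]}\,e^{-\mu_4 2\pi x (w/2\pi|b|)}\,e^{\mu_4[\frac{d}{2b}(w^2+\tau^2)+\frac{w}{b}(b\eta-d\tau)-\frac{\pi}{4}]}$, a product of a real scalar and three exponentials, each built from the single imaginary unit $\mu_4$. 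The bookkeeping of the fixed phase constant and of $b$ versus $|b|$ is routine and I would suppress it here.

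The second step is to push this factorisation through the integral, which is the only place where the non-commutativity and non-associativity of $\mathbb O$ demand care. The three exponential factors all lie in the subalgebra $\mathbb R\oplus\mathbb R\mu_4\cong\mathbb C$, which is commutative and associative, so they may be combined in any order; moreover, for each fixed $x$ the elements $f(x)$ and $\mu_4$ generate an associative subalgebra of $\mathbb O$ (any two octonions do), which contains all three exponential factors, so the product $f(x)\,e^{\mu_4[\frac{a}{2b}x^2+\frac{1}{b}x\tau]}\,e^{-\mu_4 2\pi x (w/2\pi|b|)}\,e^{\mu_4[\frac{d}{2b}(w^2+\tau^2)+\frac{w}{b}(b\eta-d\tau)-\frac{\pi}{4}]}$ is well defined irrespective of bracketing. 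Since the last exponential is independent of $x$, right-multiplication by it commutes with $\int_{\mathbb R}(\cdot)\,dx$, and it may be pulled out on the right of the integral together with the scalar $1/\sqrt{2\pi|b|}$.

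What then remains inside the integral is $\int_{\mathbb R}\big(f(x)e^{\mu_4[\frac{a}{2b}x^2+\frac{1}{b}x\tau]}\big)\,e^{-\mu_4 2\pi x w'}\,dx$ with $w'=w/(2\pi|b|)$, which is precisely $\mathcal F_{\mu_4}\big\{f(x)e^{\mu_4[\frac{a}{2b}x^2+\frac{1}{b}x\tau]}\big\}\big(w/(2\pi|b|)\big)$ by the definition $(\ref{1doft})$ of the one-dimensional $\mathbb O$-FT. Reassembling the factors yields the asserted identity $(\ref{relation})$. I do not expect a genuine obstacle: the substantive content is the phase decomposition together with the algebraic justification that every product occurring stays inside an associative subalgebra, so that the ordering of factors is unambiguous and the $w$-dependent exponential can legitimately be taken outside on the right.
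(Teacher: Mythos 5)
Your argument is correct, and it is worth noting at the outset that the paper states Lemma \ref{lemrelation} with no proof at all, so there is nothing to compare your computation against except the expected one, which you carry out. The two points that genuinely need to be said are exactly the two you make: the regrouping of the quadratic phase into a chirp in $x$, a bilinear Fourier-type term, and an $x$-independent right factor; and the observation that every product occurring lies in the associative subalgebra of $\mathbb O$ generated by $f(x)$ and $\mu_4$ (Artin's theorem), so the factorisation of the kernel and the extraction of the constant right factor from the integral are legitimate despite the non-associativity of $\mathbb O$. That second point is the only place where the octonionic setting differs from the classical one, and you handle it properly.

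One caveat: the ``routine bookkeeping'' you suppress conceals discrepancies that sit in the paper's own formulas rather than in your argument. Dividing the kernel's constant $-\pi/2$ in (\ref{eqnk}) by $2b$ yields $e^{-\mu_4\pi/(4b)}$, not the $e^{-\mu_4\pi/4}$ appearing in (\ref{relation}); the bilinear term produces the evaluation point $w/(2\pi b)$, not $w/(2\pi|b|)$, and these differ in sign when $b<0$; and the chirp factor in (\ref{relation}) as printed is missing the $\mu_4$ in its exponent. Your reconstruction silently corrects the last of these; the first two mean the identity as literally stated holds only for $b>0$, or after reading the $-\pi/2$ in the kernel as shorthand for the separate prefactor $e^{-\mu_4\pi/4}$ (the ``polar form of $1/\sqrt{\mu_4}$'' alluded to after the kernel's definition) rather than as a term genuinely multiplied by $\mu_4/(2b)$. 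It would strengthen your write-up to say this explicitly instead of waving it off, since a reader checking the constants will otherwise conclude that either the lemma or your derivation is wrong.
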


By applying  Lemma \ref{lemrelation} and (\ref{1doftinverse}), we get the formula for the inversion of one dimensional $\mathbb O-$OLCT, which is given below as a theorem.

\begin{theorem}\label{th1dolctinverse} Let $f$ be an octonion-valued signal $\in L^1(\mathbb R,\mathbb O).$ Then the inversion formula of the one dimensional $\mathbb O-$OLCT is
\begin{equation}\label{eqninverse1doolct}
f(x)=\int_{\mathbb R}\mathcal O^A_{\mu_4}[f](w)\overline{K^{\mu_4}_{A}(x,w)}dw,
\end{equation}
where $\overline{K^{\mu_4}_{A}(x,w)}=K^{-\mu_4}_{A}(x,w)$ and $b\ne 0.$
\end{theorem}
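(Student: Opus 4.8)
The plan is to reduce the statement to the already-available inversion formula for the one–dimensional $\mathbb O$-FT by exploiting the factorization recorded in Lemma \ref{lemrelation}. Abbreviate the chirp and the spectral phase by
\[
g(x)=f(x)\,e^{\mu_4\left[\frac{a}{2b}x^2+\frac{1}{b}x\tau\right]},\qquad
\Psi(w)=e^{\mu_4\left[\frac{d}{2b}(w^2+\tau^2)+\frac{w}{b}(b\eta-d\tau)-\frac{\pi}{4}\right]},
\]
so that \eqref{relation} reads $\mathcal O^{A}_{\mu_4}[f](w)=\tfrac{1}{\sqrt{2\pi|b|}}\,\mathcal F_{\mu_4}\{g\}\!\big(\tfrac{w}{2\pi|b|}\big)\,\Psi(w)$. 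Since $\mu_4^2=-1$, the element $\Psi(w)$ lies in the commutative, associative subfield $\mathbb R\oplus\mathbb R\mu_4\cong\mathbb C$ of $\mathbb O$; it is therefore invertible with $\Psi(w)^{-1}=\overline{\Psi(w)}$, and right–multiplication by $\overline{\Psi(w)}$ (together with a factor $\sqrt{2\pi|b|}$) isolates $\mathcal F_{\mu_4}\{g\}\!\big(\tfrac{w}{2\pi|b|}\big)$.

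Next I would substitute this into the $\mathbb O$-FT inversion \eqref{1doftinverse} applied to $g$, namely $g(x)=\int_{\mathbb R}\mathcal F_{\mu_4}\{g\}(w')\,e^{\mu_4 2\pi x w'}\,dw'$, and change variables $w'=\tfrac{w}{2\pi|b|}$; the Jacobian $\tfrac{1}{2\pi|b|}$ together with the $\sqrt{2\pi|b|}$ produced above combine to the correct prefactor $\tfrac{1}{\sqrt{2\pi|b|}}$, while $e^{\mu_4 2\pi x w'}$ becomes a pure $\mu_4$-phase linear in $w$. Finally, recovering $f$ from $g$ by right–multiplying by $e^{-\mu_4\left[\frac{a}{2b}x^2+\frac1b x\tau\right]}$ and collecting the three exponential contributions — from $\overline{\Psi(w)}$, from the transformed $\mathbb O$-FT kernel, and from the inverse chirp — a routine simplification of the quadratic-, linear- and constant-in-$w$ terms shows that the bracketed phase equals the exponent of $K^{-\mu_4}_A(x,w)$, yielding exactly \eqref{eqninverse1doolct}. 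The remaining identity $\overline{K^{\mu_4}_A(x,w)}=K^{-\mu_4}_A(x,w)$ is immediate from \eqref{eqnk}: conjugation fixes the real prefactor $\tfrac{1}{\sqrt{2\pi|b|}}$ and sends $e^{\frac{\mu_4}{2b}[\cdots]}=\cos(\cdot)+\mu_4\sin(\cdot)$ to $\cos(\cdot)-\mu_4\sin(\cdot)=e^{-\frac{\mu_4}{2b}[\cdots]}$.

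The step I expect to require the most care is the bookkeeping of the order of multiplication, since $\mathbb O$ is neither commutative nor associative. Two observations keep the computation legitimate. First, every phase factor that occurs ($\Psi$, the $\mathbb O$-FT kernel, and the chirp) lies in $\mathbb R\oplus\mathbb R\mu_4\cong\mathbb C$, and by Artin's theorem the subalgebra of $\mathbb O$ generated by a single octonion together with $\mu_4$ is associative; hence all the regroupings of an octonion value $f(x)$ or $\mathcal O^{A}_{\mu_4}[f](w)$ against products of such phases are valid. Second, one must keep every phase consistently on the right of the signal, in accordance with the conventions of \eqref{dOLCT} and \eqref{1doft}; with that convention in force the manipulations are formally identical to the classical complex–variable OLCT inversion. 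A minor technical caveat, which I would either add as a hypothesis or treat by first proving the formula on a dense subclass and then passing to the limit, is that applying \eqref{1doftinverse} pointwise presupposes $\mathcal O^{A}_{\mu_4}[f]\in L^1(\mathbb R,\mathbb O)$, equivalently $\mathcal F_{\mu_4}\{g\}\in L^1(\mathbb R,\mathbb O)$.
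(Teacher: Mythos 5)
Your proposal is correct and follows essentially the same route the paper indicates: the paper's entire justification is the sentence preceding the theorem, namely ``apply Lemma \ref{lemrelation} and the $\mathbb O$-FT inversion (\ref{1doftinverse})'', which is exactly the reduction you carry out in detail. Your added care about the placement of the phases, the use of Artin's theorem to justify regroupings inside the associative subalgebra generated by the signal value and $\mu_4$, and the integrability caveat for applying (\ref{1doftinverse}) pointwise are all refinements the paper omits but does not contradict.
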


On  replacing  the complex unit $i$ in the ordinary offset linear canonical transform by the imaginary units in the
octonions, the three dimensional octonion offset linear canonical transform($\mathbb O-$OLCT) could be defined as
\begin{definition}\label{defolct}{\bf(Three dimensional $\mathbb O-$OLCT)}\ \\
Let $A_k =\left[\begin{array}{cccc}a_k & b_k &| & \tau_k\\c_k & d_k &| & \eta_k \\\end{array}\right] $, be a matrix parameter such that $a_k$, $b_k$, $c_k$, $d_k$, $p_k$, $q_k \in \mathbb R$ and $ a_kd_k-b_kc_k=1,$ for $k=1,2,3.$ The three dimensional $\mathbb O-$OLCT of an octonion-valued signal  $f$ over $\mathbb R^3,$ is given by
\begin{equation}\label{eqnolct}
\mathcal O^{A_1,A_2,A_3}_{\mu_1,\mu_2,\mu_4}\{f\}(w)=\int_{\mathbb R^3}f(x) K^{\mu_1}_{A_1}(x_1,w_1)K^{\mu_2}_{A_1}(x_2,w_2)K^{\mu_4}_{A_3}(x_3,w_3)dx
\end{equation}
where $x=(x_{1},x_{2},x_3),\, w=(w_{1},w_{2},w_3),$ and  $K_{A_1}^{\mu_1}(x_1,w_1)$ , $K_{A_2}^{\mu_2}(x_2,w_2)$ and $K_{A_3}^{\mu_4}(x_3,w_3)$are  kernel signals  given by
\begin{equation}
K_{A_1}^{\mu_1}(x_1,w_1)==\dfrac{1}{\sqrt{2\pi |b_1|}} e^{\frac{{\mu_1}}{2b_1}\big[a_1x^2_1-2x_1(w_1-\tau_1)-2w_1(d_1\tau_1-b_1\eta_1)+d_1(w^2_1+\tau^2_1)-\frac{\pi}{2}\big]},\quad b\neq0\,
\end{equation}
\begin{equation}
K_{A_2}^{\mu_2}(x_2,w_2)=\dfrac{1}{\sqrt{2\pi |b_2|}} e^{\frac{{\mu_2}}{2b_2}\big[a_2x^2_2-2x_2(w_2-\tau_2)-2w_2(d_2\tau_2-b_2\eta_2)+d_2(w^2_2+\tau^2_2)-\frac{\pi}{2}\big]},\quad b\neq0\,
\end{equation}
and
\begin{equation}
K_{A_3}^{\mu_4}(x_3,w_3)==\dfrac{1}{\sqrt{2\pi |b_3|}} e^{\frac{{\mu_4}}{2b_3}\big[a_3x^2_3-2x_3(w_3-\tau_3)-2w_3(d_3\tau_3-b_3\eta_3)+d_3(w^2_3+\tau^2_3)-\frac{\pi}{2}\big]},\quad b\neq0\,.
\end{equation}
\end{definition}

Note that, for $b_k=0,k=1,2,3$, the $\mathbb O-$OLCT boils down to chirp multiplication operator and it is of no particular intrest for our objective in this work. Hence for the sake of braveity, we always set $b_k\ne 0$ in the paper unless stated otherwise.\\\\
Also note that,the kernels with imaginary units $\mu_1,\mu_2,\mu_4$ are octonion-valued and does not  reduce to the
quaternion cases, thus the present integral transform is more interesting
 and complicated.\\
Now we will  obtain the closed-form representation of $\mathbb O-$OLCT defined in (\ref{eqnolct}), let us begin by setting
 \begin{equation}\label{xi}\xi_k={\frac{{1}}{2b_k}\big[a_kx^2_k-2x_k(w_k-\tau_k)-2w_k(d_k\tau_k-b_k\eta_k)+d_k(w^2_k+\tau^2_k)-\frac{\pi}{2}\big]},\quad k=1,2,3.\end{equation}
Thus,
\begin{eqnarray}\label{eulerproduct}
 \nonumber K_{A_1}^{\mu_1}(x_1,w_1)K_{A_2}^{\mu_2}(x_2,w_2)K_{A_3}^{\mu_4}(x_3,w_3)&=&\frac{1}{2\pi\sqrt{2\pi|b_1b_2b_3|}}e^{\mu_1\xi_1}e^{\mu_2\xi_2}e^{\mu_4\xi_3}\\
\nonumber&=&\frac{1}{2\pi\sqrt{2\pi|b_1b_2b_3|}}(c_1+\mu_1s_1)(c_2+\mu_2s_2)(c_3+\mu_4s_3)\\
\nonumber&=&\frac{1}{2\pi\sqrt{2\pi|b_1b_2b_3|}}(c_1c_2c_3+s_1c_2c_3\mu_1+c_1s_2c_3\mu_2\\
\nonumber&\quad+& s_1s_2c_3\mu_3+c_1c_2s_3\mu_4+s_1c_2s_3\mu_5+c_1s_2s_3\mu_6+s_1s_2s_3\mu_7),\\
\end{eqnarray}
where $c_k=\cos\xi_k$ and $s_k=\sin\xi_k,\quad k=1,2,3.$\\
Using (\ref{eulerproduct}) in (\ref{eqnolct}) we get closed-form representation of $\mathbb O-$OLCT given by following lemma.\\

\begin{lemma}[Closed-form representation]\label{lemclosed}
The $\mathbb O-$OLCT of a three dimensional signal $f:\mathbb R^3\longrightarrow\mathbb O$ has the closed-form representation:
\begin{equation}\label{eqnclosed}
\mathcal O^{A_1,A_2,A_3}_{\mu_1,\mu_2,\mu_3}\{f\}(w)={\mathbf\Phi}_0(w)+{\mathbf\Phi}_1(w)+{\mathbf\Phi}_2(w)+{\mathbf\Phi}_3(w)
+{\mathbf\Phi}_4(w)+{\mathbf\Phi}_5(w)+{\mathbf\Phi}_6(w)+{\mathbf\Phi}_7(w)
\end{equation}
\end{lemma}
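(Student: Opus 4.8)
The plan is to substitute the Euler-product expansion \eqref{eulerproduct} into the definition \eqref{eqnolct} and then collect terms according to the seven imaginary units together with the real part, thereby reading off the eight components $\mathbf\Phi_0,\dots,\mathbf\Phi_7$ explicitly. First I would write, using \eqref{xi}, the product of the three kernels as
\[
K_{A_1}^{\mu_1}(x_1,w_1)K_{A_2}^{\mu_2}(x_2,w_2)K_{A_3}^{\mu_4}(x_3,w_3)
=\frac{1}{2\pi\sqrt{2\pi|b_1b_2b_3|}}\bigl(c_1c_2c_3+s_1c_2c_3\mu_1+\cdots+s_1s_2s_3\mu_7\bigr),
\]
with $c_k=\cos\xi_k$, $s_k=\sin\xi_k$; this is exactly \eqref{eulerproduct}, obtained by expanding each factor $e^{\mu_j\xi_k}=c_k+\mu_j s_k$ and using the multiplication Table~I to identify $\mu_1\mu_2=\mu_3$, $\mu_1\mu_4=\mu_5$, $\mu_2\mu_4=\mu_6$, $\mu_3\mu_4=\mu_7$ (the products must be taken left-to-right in the prescribed order, since $\mathbb O$ is non-associative).

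Next I would insert this expansion under the integral sign in \eqref{eqnolct}. Since $f(x)=f_0(x)+\sum_{i=1}^7 f_i(x)e_i$ is octonion-valued, multiplying $f(x)$ on the right by the scalar combination $c_1c_2c_3+\cdots+s_1s_2s_3\mu_7$ produces, after using the multiplication rules once more, a finite sum of terms each of which is a real-valued integrand times a single unit $e_i$ (or the scalar $1$). Grouping the resulting $64$ products by which unit they carry, and defining $\mathbf\Phi_j(w)$ to be the part of the integral proportional to $e_j$ (with $e_0:=1$), gives the claimed eight-term decomposition \eqref{eqnclosed}. Concretely, each $\mathbf\Phi_j(w)$ is an integral over $\mathbb R^3$ of an explicit real linear combination of the products $f_i(x)\,c_{k_1}c_{k_2}c_{k_3}$-type terms against $\tfrac{1}{2\pi\sqrt{2\pi|b_1b_2b_3|}}$; I would record these formulas in the proof (or note that they are obtained mechanically), which also makes transparent the later ``norm split into quaternions'' since $\mathbf\Phi_0+\mathbf\Phi_1 e_1+\mathbf\Phi_2 e_2+\mathbf\Phi_3 e_3$ and $(\mathbf\Phi_4+\mathbf\Phi_5 e_1+\mathbf\Phi_6 e_2+\mathbf\Phi_7 e_3)e_4$ are the two quaternion constituents.

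The main obstacle is purely bookkeeping rather than conceptual: because $\mathbb O$ is neither commutative nor associative, one must be scrupulous about the order in which the three kernel factors and the signal $f(x)$ are multiplied, and about applying Lemma~\ref{lem1} and Table~I correctly when re-expressing products such as $e_i\mu_7$ in the standard basis. I would therefore fix once and for all the left-to-right convention already used in \eqref{3doft} and \eqref{eqnolct}, verify the eight ``pure-unit'' coefficients $\mu_1,\dots,\mu_7$ in \eqref{eulerproduct} against the table, and then the passage to \eqref{eqnclosed} is immediate by linearity of the integral. No analytic input beyond $f\in L^1(\mathbb R^3,\mathbb O)$ (so that each scalar integral converges, $|c_k|,|s_k|\le 1$) is needed.
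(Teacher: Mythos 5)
Your first step coincides with the paper's: expand each kernel factor as $c_k+\mu s_k$ and use $\mu_1\mu_2=\mu_3$, $\mu_1\mu_4=\mu_5$, $\mu_2\mu_4=\mu_6$, $\mu_3\mu_4=\mu_7$ with the left-to-right convention to obtain exactly \eqref{eulerproduct}; the paper's own justification is in fact nothing more than this substitution followed by the remark that \eqref{eulerproduct} is inserted into \eqref{eqnolct}. But the decomposition you then produce is not the one the lemma asserts. You define ${\mathbf\Phi}_j$ as the part of the integral proportional to the basis unit $e_j$ of the \emph{output} (so your ${\mathbf\Phi}_0$ is real-valued, and your grouping collects the $64$ products $f_ie_i\cdot(\text{kernel term})$ by the resulting unit). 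The paper's ${\mathbf\Phi}_j$ are indexed instead by the eight terms of the kernel expansion paired with the $2^3$ \emph{parity} components $f_{xyz}$, $x,y,z\in\{e,o\}$, of the signal; for instance ${\mathbf\Phi}_1(w)=\int_{\mathbb R^3}f_{oee}(x)\frac{\mu_1}{2\pi\sqrt{2\pi|b_1b_2b_3|}}s_1c_2c_3\,dx$, which is still octonion-valued. These are genuinely different groupings of the same sum, and your argument does not identify the quantities the lemma is about.

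The step your proposal omits, and which is the actual content of the paper's closed form, is the parity reduction: one writes $f=\sum_{x,y,z\in\{e,o\}}f_{xyz}$ and argues that in the term of \eqref{eulerproduct} carrying $s_1c_2c_3\mu_1$ every parity component except $f_{oee}$ integrates to zero (the integrand being odd over $\mathbb R$ in at least one variable), and similarly for the other seven terms. Without that cancellation the eight integrals ${\mathbf\Phi}_j$ as defined in the paper do not sum to the transform. It is also the delicate point: the cancellation needs $c_k=\cos\xi_k$ even and $s_k=\sin\xi_k$ odd in $x_k$, which holds for the Fourier phase $2\pi x_kw_k$ but is not literally true for the offset phase \eqref{xi}, which contains the even term $a_kx_k^2/2b_k$ and $x_k$-independent terms alongside the odd one; any complete proof (yours or the paper's) has to confront this. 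So your bookkeeping is sound as far as it goes, but it establishes a different eight-term identity, and the identification of the paper's ${\mathbf\Phi}_j$ is missing.
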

where we put the integrals
\begin{eqnarray*}
{\mathbf\Phi}_0(w)=\int_{\mathbb R^3}f_{eee}(x)\frac{1}{2\pi\sqrt{2\pi|b_1b_2b_3|}}c_1c_2c_3dx,
\end{eqnarray*}
and,
\begin{eqnarray*}
{\mathbf\Phi}_1(w)=\int_{\mathbb R^3}f_{oee}(x)\frac{\mu_1}{2\pi\sqrt{2\pi|b_1b_2b_3|}}s_1c_2c_3dx,
\end{eqnarray*}
\begin{eqnarray*}
{\mathbf\Phi}_2(w)=\int_{\mathbb R^3}f_{eoe}(x)\frac{\mu_2}{2\pi\sqrt{2\pi|b_1b_2b_3|}}c_1s_2c_3dx,
\end{eqnarray*}
\begin{eqnarray*}
{\mathbf\Phi}_3(w)=\int_{\mathbb R^3}f_{ooe}(x)\frac{\mu_3}{2\pi\sqrt{2\pi|b_1b_2b_3|}}s_1s_2c_3dx,
\end{eqnarray*}
\begin{eqnarray*}
{\mathbf\Phi}_4(w)=\int_{\mathbb R^3}\frac{\mu_4}{2\pi\sqrt{2\pi|b_1b_2b_3|}}f_{eeo}(x)c_1c_2s_3dx,
\end{eqnarray*}
\begin{eqnarray*}
{\mathbf\Phi}_5(w)=\int_{\mathbb R^3}f_{oeo}(x)\frac{\mu_5}{2\pi\sqrt{2\pi|b_1b_2b_3|}}s_1c_2s_3dx,
\end{eqnarray*}
\begin{eqnarray*}
{\mathbf\Phi}_6(w)=\int_{\mathbb R^3}f_{eoo}(x)\frac{\mu_6}{2\pi\sqrt{2\pi|b_1b_2b_3|}}c_1s_2s_3dx,
\end{eqnarray*}
\begin{eqnarray*}
{\mathbf\Phi}_7(w)=\int_{\mathbb R^3}f_{ooo}(x)\frac{\mu_7}{2\pi\sqrt{2\pi|b_1b_2b_3|}}s_1s_2s_3dx,
\end{eqnarray*}

where the functions $f_{xyz},x,y,z\in\{e,o\},$ are eight components of $f$ of different parity with respect appropriate variable, for example $f_{oeo}(x)$ is odd with respect to $x_1$, even with respect $x_2$ and odd with respect to $x_3.$

Under suitable conditions, the original octonion-valued signal $f$ can be reconstructed from $\mathbb O-$OLCT  by its inverse transform.
\begin{definition}\label{inverse} The inverse $\mathbb O-$OLCT  of signal $g:\mathbb R^3\longrightarrow \mathbb O$ is defined by
\begin{equation}
\left\{O^{A_1,A_2,A_3}_{\mu_1,\mu_2,\mu_4}\right\}^{-1}\{g\}(x)=\int_{\mathbb R^3}f(w)\overline{K_{A_3}^{\mu_4}(x_1,w_1)} \overline{K_{A_2}^{\mu_2}(x_2,w_2)}\overline{K_{A_1}^{\mu_1}(x_3,w_3)}dw
 \end{equation}
 It has the closed-form representation:
  \begin{equation}
  \left\{O^{A_1,A_2,A_3}_{\mu_1,\mu_2,\mu_4}\right\}^{-1}\{g\}(x)={\mathbf\Psi}_0(w)+{\mathbf\Psi}_1(w)+{\mathbf\Psi}_2(w)+{\mathbf\Psi}_3(w)
+{\mathbf\Psi}_4(w)+{\mathbf\Psi}_5(w)+{\mathbf\Psi}_6(w)+{\mathbf\Psi}_7(w)
   \end{equation}
   where we put the integrals
\begin{eqnarray*}
{\mathbf\Psi}_0(w)=\int_{\mathbb R^3}f_{eee}(x)\frac{1}{2\pi\sqrt{2\pi|b_1b_2b_3|}}c_1c_2c_3dx,
\end{eqnarray*}
and,
\begin{eqnarray*}
{\mathbf\Psi}_1(w)=\int_{\mathbb R^3}f_{oee}(x)\frac{-\mu_1}{2\pi\sqrt{2\pi|b_1b_2b_3|}}s_1c_2c_3dx,
\end{eqnarray*}
\begin{eqnarray*}
{\mathbf\Psi}_2(w)=\int_{\mathbb R^3}f_{eoe}(x)\frac{-\mu_2}{2\pi\sqrt{2\pi|b_1b_2b_3|}}c_1s_2c_3dx,
\end{eqnarray*}
\begin{eqnarray*}
{\mathbf\Psi}_3(w)=\int_{\mathbb R^3}f_{ooe}(x)\frac{-\mu_3}{2\pi\sqrt{2\pi|b_1b_2b_3|}}s_1s_2c_3dx,
\end{eqnarray*}
\begin{eqnarray*}
{\mathbf\Psi}_4(w)=\int_{\mathbb R^3}f_{eeo}(x)\frac{-\mu_4}{2\pi\sqrt{2\pi|b_1b_2b_3|}}c_1c_2s_3dx,
\end{eqnarray*}
\begin{eqnarray*}
{\mathbf\Psi}_5(w)=\int_{\mathbb R^3}f_{oeo}(x)\frac{-\mu_5}{2\pi\sqrt{2\pi|b_1b_2b_3|}}s_1c_2s_3dx,
\end{eqnarray*}
\begin{eqnarray*}
{\mathbf\Psi}_6(w)=\int_{\mathbb R^3}f_{eoo}(x)\frac{-\mu_6}{2\pi\sqrt{2\pi|b_1b_2b_3|}}c_1s_2s_3dx,
\end{eqnarray*}
\begin{eqnarray*}
{\mathbf\Psi}_7(w)=\int_{\mathbb R^3}f_{ooo}(x)\frac{-\mu_7}{2\pi\sqrt{2\pi|b_1b_2b_3|}}s_1s_2s_3dx,
\end{eqnarray*}
\end{definition}

The simple way to define inverse of $\mathbb O-$OLCT is to introduce  Inversion theorem.

\begin{theorem}[ Inversion for three dimensional $\mathbb O-$OLCT ]\label{thminverse3dolct}\ \\
Every octonion-valued signal $f:\mathbb R^3\longrightarrow \mathbb O$ can be reconstructed by the formula
\begin{eqnarray*}\label{eqninv3dolct}
\nonumber f(x)&=&\left\{\mathcal O^{A_1,A_2,A_3}_{\mu_1,\mu_2,\mu_4}\right\}^{-1}[\mathcal O^{A_1,A_2,A_3}_{\mu_1,\mu_2,\mu_4}\{f\}](x)\\
\nonumber&=&\int_{\mathbb R^3}\mathcal O^{A_1,A_2,A_3}_{\mu_1,\mu_2,\mu_4}\{f\}(w)K_{A_3}^{-\mu_4}(x_1,w_1) K_{A_2}^{-\mu_2}(x_2,w_2)K_{A_1}^{-\mu_1}(x_3,w_3)dw,\\
\end{eqnarray*}
\begin{proof}
By using the definition of QOLCT\cite{OWN1}, the one dimensional $\mathbb O-$OLCT and three dimensional $\mathbb O-$OLCT the proof of theorem \ref{thminverse3dolct} follows.
\end{proof}
\end{theorem}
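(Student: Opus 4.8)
The plan is to reduce the three dimensional $\mathbb O-$OLCT to the three dimensional $\mathbb O-$FT, whose inversion formula \eqref{3doft inverse} is already available, and then to strip off the auxiliary chirp and modulation factors. The first step is to establish a three dimensional analogue of Lemma \ref{lemrelation}: each one dimensional kernel $K_{A_k}^{\mu}(x_k,w_k)$, with $\mu\in\{\mu_1,\mu_2,\mu_4\}$, factors as an input chirp $e^{\mu[\frac{a_k}{2b_k}x_k^2+\frac{\tau_k}{b_k}x_k]}$ depending only on $x_k$, times a pure Fourier exponential $e^{-\mu\, x_k w_k/(2\pi|b_k|)}$, times an output modulation $e^{\mu[\frac{d_k}{2b_k}(w_k^2+\tau_k^2)+\frac{w_k}{b_k}(b_k\eta_k-d_k\tau_k)-\frac{\pi}{4}]}$ depending only on $w_k$; all three lie in the real plane spanned by $1$ and the single unit $\mu$, hence commute and associate freely among themselves. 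Multiplying $f$ on the right by the product of the three input chirps and collecting the three output modulations in the left-to-right order fixed by \eqref{eqnolct} then writes $\mathcal O^{A_1,A_2,A_3}_{\mu_1,\mu_2,\mu_4}\{f\}(w)$ as a scaled $\mathbb O-$FT $\mathcal F_{\mu_1,\mu_2,\mu_4}$ of the chirped signal, evaluated at the rescaled point $(w_1/2\pi|b_1|,w_2/2\pi|b_2|,w_3/2\pi|b_3|)$.

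The second step applies the $\mathbb O-$FT inversion \eqref{3doft inverse} to recover the chirped signal, after which dividing out the chirps --- each a unit-modulus octonion whose inverse is its conjugate --- returns $f$; using $\overline{K_{A_k}^{\mu}(x_k,w_k)}=K_{A_k}^{-\mu}(x_k,w_k)$ from Theorem \ref{th1dolctinverse}, the reassembled factors are exactly the kernels $K_{A_3}^{-\mu_4}$, $K_{A_2}^{-\mu_2}$, $K_{A_1}^{-\mu_1}$ in the non-associative order demanded by Definition \ref{inverse}, which is the claimed formula. Since the octonions are neither commutative nor associative and three distinct units occur, the delicate point is the order in which the modulations and chirps are inserted and removed across different variables; to handle this I would pass through the decomposition $f=g+he_4$ with $g,h$ quaternion-valued and invoke Lemma \ref{lem1}, turning the three dimensional $\mathbb O-$OLCT into a superposition of two quaternion offset linear canonical transforms, so that the QOLCT inversion theorem of \cite{OWN1} applies termwise and the two pieces recombine via $o=a+be_4$. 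Equivalently, one may bypass quaternions and argue from the closed-form representation of Lemma \ref{lemclosed}: the forward transform is a sum of eight real integrals of the parity components $f_{xyz}$ against products of the cosines $c_k$ and sines $s_k$, the inverse of Definition \ref{inverse} is the same sum with each $\mu_j$ replaced by $-\mu_j$, and composing them reduces --- one parity component and one variable at a time --- to the classical fact that the cosine (respectively sine) OLCT is self-reciprocal on even (respectively odd) functions, the unimodularity $a_kd_k-b_kc_k=1$ being precisely what validates these one dimensional reproducing identities.

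The main obstacle I anticipate is not a hard estimate but the bookkeeping forced by non-associativity: one must verify that the chirp attached to the variable $x_k$ can genuinely be transported past the two Fourier exponentials in the other variables without changing the value --- it can, because exponentials in distinct units $\mu_j$ acting on distinct variables interact only through the fixed octonion $f(x)$, and the quaternion split (or, alternatively, the parity decomposition) isolates exactly that interaction --- and one must justify interchanging the $x$- and $w$-integrations. The latter is routine under the standing hypothesis $f\in L^1(\mathbb R^3,\mathbb O)$, which makes $\mathcal O^{A_1,A_2,A_3}_{\mu_1,\mu_2,\mu_4}\{f\}$ well defined and bounded, via Fubini together with an approximate-identity argument identical to the scalar OLCT inversion; no genuinely new analysis is required beyond what \eqref{3doft inverse} and \cite{OWN1} already supply.
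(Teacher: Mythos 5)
Your proposal is correct and, in its operative form (the quaternion split $f=g+h e_4$ together with a termwise application of the QOLCT inversion theorem, with the parity/closed-form decomposition of Lemma \ref{lemclosed} as an alternative), follows essentially the same route the paper indicates; the paper's own proof is only the one-sentence assertion that the result ``follows'' from the definitions of the QOLCT and of the one- and three-dimensional $\mathbb O$-OLCT, so you have in effect supplied the bookkeeping it omits. Your explicit observation that the $\mu_1$- and $\mu_2$-modulations cannot simply be commuted past kernels carrying other imaginary units, and that the quaternion (or parity) split is precisely what licenses the reduction to associative subalgebras where the QOLCT and scalar OLCT inversion formulas apply, is the detail that makes the paper's one-line proof legitimate.
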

For the clarity of the formulas we denote $x_{l,m,n}=(lx_1,mx_2,nx_3),$ $l,m,n\in\{+,-\},$ i.e. $x_{+-+}=(x_1,-x_2,x_3)$ and denoting the even and odd part of a function $f(x)$ by $f_e(x)$ and $f_o(x)$ where $f_e=(f(x_{+++})+f(x_{++-}))/2$ which is only
even in the third variable $x_3.$  Similarly, $f_o=(f(x_{+++})-f(x_{++-}))/2.$\\
\begin{lemma}[Norm split]\label{lemoctnorm}Let $f:\mathbb R^3\longrightarrow \mathbb O$  beoctonion-valued signal and $\mathcal O^{A_1,A_2,A_3}_{\mu_1,\mu_2,\mu_4}\{f\}(w)$ be the  $\mathbb O-$OLCT of $f,$ then
\begin{eqnarray*}\label{eqnoctnorm}
\|\mathcal O^{A_1,A_2,A_3}_{\mu_1,\mu_2,\mu_4}\{f\}(w)\|^2&=&\frac{1}{2\pi|b_3|}\left(\|\mathcal O^{A_1,A_2}_{\mu_1,\mu_2}\{g_e\}(w)\|^2+\|\mathcal O^{A_1,A_2}_{\mu_1,\mu_2}\{h_o\}(w)\|^2\right.\\
&&+\left.\|\mathcal O^{A_1,A_2}_{\mu_1,\mu_2}\{h_e\}(w)\|^2+\|\mathcal O^{A_1,A_2}_{\mu_1,\mu_2}\{g_o\}(w)\|^2\right)\\
\end{eqnarray*}

\begin{proof}
From (\ref{ofun}) every octonion-valued signal $f$ has explicit form  $f=g+h\mu_4$ where $g,h\in\mathbb H.$  And  by the even and odd part we further express the $\mathbb O-$OLCT in  (\ref{eqnolct})  as follows,
\begin{eqnarray}\label{qolct}
\nonumber\mathcal O^{A_1,A_2,A_3}_{\mu_1,\mu_2,\mu_4}\{f\}(w)&=&\int_{\mathbb R^3}g(x)K_{A_1}^{\mu_1}(x_1,w_1)K_{A_2}^{\mu_2}(x_2,w_2)K_{A_3}^{\mu_4}(x_3,w_3)dx\\
\nonumber&&+\int_{\mathbb R^3}h(x)K_{A_1}^{-\mu_1}(x_1,w_1)K_{A_2}^{-\mu_2}(x_2,w_2)\mu_4K_{A_3}^{\mu_4}(x_3,w_3)dx\\
\nonumber&=&\frac{1}{\sqrt{2\pi|b_3|}}\int_{\mathbb R^3}g_e(x)K_{A_1}^{\mu_1}(x_1,w_1)K_{A_2}^{\mu_2}(x_2,w_2)c_3dx\\
\nonumber&&+\frac{1}{\sqrt{2\pi|b_3|}}\int_{\mathbb R^3}h_o(x)K_{A_1}^{-\mu_1}(x_1,w_1)K_{A_2}^{-\mu_2}(x_2,w_2)s_3dx\\
\nonumber&&+\left(\frac{1}{\sqrt{2\pi|b_3|}}\int_{\mathbb R^3}h_e(x)K_{A_1}^{-\mu_1}(x_1,w_1)K_{A_2}^{-\mu_2}(x_2,w_2)c_3dx\right.\\
\nonumber&&-\left.\frac{1}{\sqrt{2\pi|b_3|}}\int_{\mathbb R^3}g_o(x)K_{A_1}^{\mu_1}(x_1,w_1)K_{A_2}^{\mu_2}(x_2,w_2)s_3dx\right)\mu_4.\\
\end{eqnarray}
With $c_k=\cos\xi_k $ and $s_k=\sin\xi_k$  where $\xi_k$ is given in (\ref{xi}).
From (\ref{qolct}) it is clear that $\mathbb O-$OLCT can be divided into four QOLCTs. Thus the norm of $\mathbb O-$OLCT splits into four norms of quaternion functions as:
\begin{eqnarray}\label{eqnoctnorm1}
\nonumber\|\mathcal O^{A_1,A_2,A_3}_{\mu_1,\mu_2,\mu_4}\{f\}(w)\|^2_2&=&\frac{1}{{2\pi|b_3|}}\left\|\int_{\mathbb R^3}g_e(x)K_{A_1}^{\mu_1}(x_1,w_1)K_{A_2}^{\mu_2}(x_2,w_2)c_3dx\right\|^2_2\\
\nonumber&&+\frac{1}{{2\pi|b_3|}}\left\|\int_{\mathbb R^3}h_o(x)K_{A_1}^{-\mu_1}(x_1,w_1)K_{A_2}^{-\mu_2}(x_2,w_2)s_3dx\right\|^2_2\\
\nonumber&&+\frac{1}{{2\pi|b_3|}}\left\|\int_{\mathbb R^3}h_e(x)K_{A_1}^{-\mu_1}(x_1,w_1)K_{A_2}^{-\mu_2}(x_2,w_2)c_3dx\right\|^2_2\\
\nonumber&&+\frac{1}{{2\pi|b_3|}}\left\|\int_{\mathbb R^3}g_o(x)K_{A_1}^{\mu_1}(x_1,w_1)K_{A_2}^{\mu_2}(x_2,w_2)s_3dx \right\|^2_2\\
\end{eqnarray}
where the  equality is by the fact that $f_e$ and $f_o$ are orthogonal in $L^2$ inner product.
\begin{eqnarray*}
\nonumber\|\mathcal O^{A_1,A_2,A_3}_{\mu_1,\mu_2,\mu_4}\{f\}(w)\|^2_2&=&\frac{1}{2\pi|b_3|}\left(\|\mathcal O^{A_1,A_2}_{\mu_1,\mu_2}\{g_e\}(w)\|^2_2+\|\mathcal O^{A_1,A_2}_{\mu_1,\mu_2}\{h_o\}(w)\|^2_2\right.\\
\nonumber&&+\left.\|\mathcal O^{A_1,A_2}_{\mu_1,\mu_2}\{h_e\}(w)\|^2_2+\|\mathcal O^{A_1,A_2}_{\mu_1,\mu_2}\{g_o\}(w)\|^2_2\right)\\
\end{eqnarray*}

Which completes the proof.
\end{proof}
\end{lemma}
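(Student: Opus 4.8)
The plan is to peel off the third kernel, reduce the three--dimensional $\mathbb O$--OLCT to four quaternion--valued integrals, and then split the $L^2$--norm using orthogonality together with the modulus identity $(\ref{o6})$. First I would use the octonion splitting $f=g+h\mu_4$ with $g,h\in\mathbb H$ from $(\ref{ofun})$, and write the third kernel in Euler form, $K_{A_3}^{\mu_4}(x_3,w_3)=\frac{1}{\sqrt{2\pi|b_3|}}(\cos\xi_3+\mu_4\sin\xi_3)$ with $\xi_3$ as in $(\ref{xi})$, and substitute both into $(\ref{eqnolct})$. For the term carrying $h\mu_4$ one must transport $\mu_4$ to the right of the two leading kernels $K_{A_1}^{\mu_1}K_{A_2}^{\mu_2}$; since those are exponentials in $\mu_1$ and $\mu_2$, Lemma \ref{lem1}(1) (i.e.\ $e_4a=\overline a e_4$) turns $\mu_4K_{A_1}^{\mu_1}K_{A_2}^{\mu_2}$ into $K_{A_1}^{-\mu_1}K_{A_2}^{-\mu_2}\mu_4$, which is precisely the source of the conjugated kernels appearing in $(\ref{qolct})$.

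Next, using the even/odd convention in $x_3$ fixed just before the statement, write $g=g_e+g_o$ and $h=h_e+h_o$; since $\cos\xi_3$ is even and $\sin\xi_3$ is odd in $x_3$, the odd contributions to the cosine integrals and the even contributions to the sine integrals vanish, leaving exactly the four quaternion--valued integrals displayed in $(\ref{qolct})$. Absorbing the $x_1,x_2$ integration identifies each of them as a two--dimensional QOLCT $\mathcal O^{A_1,A_2}_{\mu_1,\mu_2}$ (in the sense of $(\ref{dQLCT})$) of $g_e$, $h_o$, $h_e$ and $g_o$ respectively, the leftover $\frac{1}{\sqrt{2\pi|b_3|}}$ factor being what produces the global $\frac{1}{2\pi|b_3|}$ after squaring. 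Then I would take the $L^2(\mathbb R^3,\mathbb O)$--norm of $(\ref{qolct})$ and expand the square: by $(\ref{o6})$ the ``pure quaternion'' part ($g_e,h_o$) and the ``$\mu_4$--part'' ($h_e,g_o$) add up without interaction, and within each part the two summands are $L^2$--orthogonal, one being built from an $x_3$--even and the other from an $x_3$--odd function, so all cross terms integrate to zero; this is exactly the orthogonality invoked right after $(\ref{eqnoctnorm1})$. Collecting the four surviving norm--squares and factoring out $\frac{1}{2\pi|b_3|}$ yields the claimed identity.

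The step I expect to be the main obstacle is the reduction bookkeeping: because $\mathbb O$ is non--commutative and non--associative, the rearrangement $\mu_4K_{A_1}^{\mu_1}K_{A_2}^{\mu_2}K_{A_3}^{\mu_4}=\cdots$ has to be justified term by term via Lemma \ref{lem1}, checking that the signs of $\mu_1$ and $\mu_2$ flip while that of $\mu_4$ does not and that the ordering of the three kernel factors is preserved throughout. Once this is pinned down, the even/odd orthogonality and the norm split via $(\ref{o6})$ are routine.
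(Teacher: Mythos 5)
Your proposal follows essentially the same route as the paper's own proof: split $f=g+h\mu_4$, move $\mu_4$ past the first two kernels via Lemma \ref{lem1} (conjugating them to $K_{A_1}^{-\mu_1}K_{A_2}^{-\mu_2}$), expand the third kernel in Euler form to separate the $x_3$-even and $x_3$-odd parts into four QOLCT integrals, and then split the norm using $(\ref{o6})$ together with even/odd orthogonality. Your explicit attention to the non-associativity bookkeeping is a point the paper glosses over, but the decomposition and the key steps are the same, so the proposal is correct and matches the paper's argument.
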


Again denoting the quaternion form of $f$ by $f=g+h\mu_4$, we have
\begin{eqnarray}\label{evenoddfun}
\nonumber\|f\|^2_2&=&\|g+h\mu_4\|^2_2\\
\nonumber&=&\|g\|^2+\|h\|^2\\
\nonumber&=&\|g_e\|^2+\|g_o\|^2+\|h_e\|^2+\|h_o\|^2\\
\end{eqnarray}

Now by the Plancherel theorem for the QOLCT \cite{OWN1}, we have
\begin{equation}\label{planch}
\left\|\int_{\mathbb R^3}g_e(x)K_{A_1}^{\mu_1}(x_1,w_1)K_{A_2}^{\mu_2}(x_2,w_2)c_3dx\right\|^2_2=\left\|g_e(x)\right\|^2_2
\end{equation}
Similarly results hold for the remaining three norms.\\

On applying (\ref{evenoddfun}) and (\ref{planch}) in lemma \ref{lemoctnorm}
we have proved the following Energy conservation relation for $\mathbb O-$OLCT.
\begin{theorem}[Energy conservation]\label{thenergy}\ \\
Let $f:\mathbb R^3\longrightarrow \mathbb O$ be a continuous and square integrable octonion-valued signal function. Then we have
\begin{equation}\label{eqnenergy}
\|\mathcal O^{A_1,A_2,A_3}_{\mu_1,\mu_2,\mu_4}\{f\}(w)\|^2=\frac{1}{2\pi|b_3|}\|f\|^2_2
\end{equation}
\end{theorem}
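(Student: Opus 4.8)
The plan is to obtain the energy identity as an immediate consequence of the norm–split formula in Lemma \ref{lemoctnorm} combined with the Plancherel theorem for the QOLCT, so very little new work is needed. First I would invoke Lemma \ref{lemoctnorm} to write, for $f=g+h\mu_4$ with $g,h\in\mathbb H$ and the subscripts $e,o$ denoting the even and odd parts in the variable $x_3$,
\begin{equation*}
\|\mathcal O^{A_1,A_2,A_3}_{\mu_1,\mu_2,\mu_4}\{f\}(w)\|^2
=\frac{1}{2\pi|b_3|}\left(\|\mathcal O^{A_1,A_2}_{\mu_1,\mu_2}\{g_e\}(w)\|^2_2+\|\mathcal O^{A_1,A_2}_{\mu_1,\mu_2}\{h_o\}(w)\|^2_2+\|\mathcal O^{A_1,A_2}_{\mu_1,\mu_2}\{h_e\}(w)\|^2_2+\|\mathcal O^{A_1,A_2}_{\mu_1,\mu_2}\{g_o\}(w)\|^2_2\right).
\end{equation*}
This is the step that reduces the genuinely octonionic (non-associative) object to four quaternionic ones, and it is already established.

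Next I would apply the Plancherel theorem for the QOLCT to each of the four terms on the right-hand side. Concretely, by (\ref{planch}) one has $\|\mathcal O^{A_1,A_2}_{\mu_1,\mu_2}\{g_e\}(w)\|^2_2=\|g_e(x)\|^2_2$, and the same identity holds verbatim with $g_e$ replaced by $h_o$, $h_e$, or $g_o$. Here I would note that the factors $c_3=\cos\xi_3$ and $s_3=\sin\xi_3$ appearing in Lemma \ref{lemoctnorm} are exactly the cosine/sine components of the one-dimensional kernel $K^{\mu_4}_{A_3}(x_3,w_3)$ in the $x_3$-variable, and their contribution to the $L^2$-norm is precisely the prefactor $1/(2\pi|b_3|)$ that has already been pulled out; hence the remaining two-dimensional transforms $\mathcal O^{A_1,A_2}_{\mu_1,\mu_2}$ are honest QOLCTs to which the Plancherel theorem of \cite{OWN1} applies.

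Finally I would combine these four identities with the Pythagorean decomposition (\ref{evenoddfun}): writing $f=g+h\mu_4$ and using $|a+b\mu_4|^2=|a|^2+|b|^2$ (equation (\ref{o6})) together with the orthogonality of even and odd parts in $L^2$,
\begin{equation*}
\|f\|_2^2=\|g\|_2^2+\|h\|_2^2=\|g_e\|_2^2+\|g_o\|_2^2+\|h_e\|_2^2+\|h_o\|_2^2 .
\end{equation*}
Substituting the Plancherel identities into the norm–split formula yields $\|\mathcal O^{A_1,A_2,A_3}_{\mu_1,\mu_2,\mu_4}\{f\}(w)\|^2=\frac{1}{2\pi|b_3|}\left(\|g_e\|_2^2+\|h_o\|_2^2+\|h_e\|_2^2+\|g_o\|_2^2\right)=\frac{1}{2\pi|b_3|}\|f\|_2^2$, which is (\ref{eqnenergy}).

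The proof is therefore routine once Lemma \ref{lemoctnorm} is in hand; the only point requiring care—and the place where an error could hide—is the bookkeeping of the scaling constant: one must check that the single factor $1/(2\pi|b_3|)$ (rather than $1/\sqrt{2\pi|b_3|}$ per term, or an extra $1/(2\pi|b_1b_2|)$) is the correct normalization, i.e. that the QOLCT Plancherel theorem being cited is the isometric (unnormalized-to-norm-preserving) version consistent with the kernel constants $1/\sqrt{2\pi|b_k|}$ used here. I would verify this consistency explicitly before concluding.
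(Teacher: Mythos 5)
Your proposal follows exactly the paper's own route: invoke the norm--split of Lemma \ref{lemoctnorm}, apply the QOLCT Plancherel identity (\ref{planch}) to each of the four quaternionic pieces, and conclude via the decomposition (\ref{evenoddfun}) of $\|f\|_2^2$. Your added remark about checking the normalization constant $1/(2\pi|b_3|)$ is a sensible caution but does not change the argument, which matches the paper's proof step for step.
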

Now we move towards our main section in which we present some uncertainty principles associated with $\mathbb O-$OLCT.

\section{\bf Uncertainty principles for  $\mathbb O-$OLCT}\label{sec 4}
We know that in signal processing there are  different types of uncertainty principles in the QFT, QLCT and QOLCT domains. Recently in \cite{li} authors investigate Heisenberg's uncertainty principle and Donoho-Stark's uncertainty principle for $\mathbb O-$LCT.
Considering that the $\mathbb O-$OLCT is a generalized
version of the $\mathbb O-$FT or $\mathbb O-$LCT, it is natural and interesting to study  uncertainty principles of a octonion-valued function and its
$\mathbb O-$OLCT. So in this section we  shall investigate some uncertainty principles for $\mathbb O-$OLCT.
\subsection{\bf Pitt's inequality and the logarithmic uncertainty principle }\ \\

Here we prove the sharp Pitt's inequality for the $\mathbb O-$OLCT and derive the associated logarithmic uncertainty inequality. The proof of the Pitt’s inequality heavily depends on the QOLCT. So we first present following Pitt's inequality for QOLCT.
\begin{lemma}[ Pitt's inequality for QOLCT]\label{lem pitt qolct}\cite{gen} \ \\
For $f\in \mathcal S(\mathbb R^2,\mathbb H),$ and     $0\le \alpha <2$,
\begin{equation}\label{eqnpitt qolct}
\int_{{\mathbb R}^2}{ {\left|\frac{ w}{b}\right|}^{-\alpha }}{\left|{\mathcal O}^{i,j}_{A_1,A_2}\left\{f(x)\right\}\left( w\right)\right|}^2d w\  \le \frac{ C_{\alpha }}{4\pi^2} \int_{{\mathbb R}^2}{ {\left| x\right|}^{\alpha }}{\left|f( x)\right|}^2\ d x.
\end{equation}
With $C_{\alpha }:=\frac{{4\pi }^2}{2^{\alpha }}{{ [}\Gamma (\frac{2-\alpha }{4})/\Gamma (\frac{2+\alpha }4)]}^2$, and $\Gamma \left(.\right)$\ is the Gamma function and ${ \mathcal S}({\mathbb R}^2,{\mathbb H})$ denotes the
Schwartz space.
\end{lemma}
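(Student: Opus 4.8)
The plan is to derive this inequality from Beckner's classical sharp Pitt inequality on $\mathbb R^2$ in two reduction steps: first strip off the quadratic chirps that separate the QOLCT from the two-variable quaternion Fourier transform (QFT), and then reduce the QFT to scalar Fourier transforms so that the classical result applies with its sharp constant. For the first step, write the phase of $K_{A_k}^{\cdot}(x_k,w_k)$ in the shape $\tfrac{a_k}{2b_k}x_k^2-\tfrac{1}{b_k}x_k(w_k-\tau_k)+\gamma_k(w_k)$ with $\gamma_k$ a function of $w_k$ alone, and pull the pure $x_k$-chirp $e^{\cdot\,a_kx_k^2/2b_k}$ together with the $w_k$-only phase out of the integral --- exactly as in the one-dimensional reduction of Lemma \ref{lemrelation}, now carried out in two variables with the units $i,j$. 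Tracking the left/right placement carefully (this is where non-commutativity enters), one obtains
\begin{equation*}
\mathcal{O}^{i,j}_{A_1,A_2}\{f\}(w)=\frac{1}{2\pi\sqrt{|b_1b_2|}}\,\Lambda_L(w)\,\mathcal{F}_{i,j}\{\Phi_f\}\!\left(\tfrac{w_1-\tau_1}{2\pi b_1},\tfrac{w_2-\tau_2}{2\pi b_2}\right)\Lambda_R(w),
\end{equation*}
where $\Lambda_L(w),\Lambda_R(w)$ are unit-modulus quaternions and $\Phi_f(x)=e^{i\,a_1x_1^2/2b_1}f(x)e^{j\,a_2x_2^2/2b_2}$ lies in $\mathcal{S}(\mathbb R^2,\mathbb H)$ with $|\Phi_f(x)|=|f(x)|$ for all $x$. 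Taking moduli kills $\Lambda_L$ and $\Lambda_R$, so $\bigl|\mathcal{O}^{i,j}_{A_1,A_2}\{f\}(w)\bigr|=\tfrac{1}{2\pi\sqrt{|b_1b_2|}}\bigl|\mathcal{F}_{i,j}\{\Phi_f\}\bigl(\tfrac{w_1-\tau_1}{2\pi b_1},\tfrac{w_2-\tau_2}{2\pi b_2}\bigr)\bigr|$.

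For the second step I would establish Pitt's inequality for the QFT itself, which follows from Beckner's classical sharp Pitt inequality on $\mathbb R^2$ by the well-known reduction of the quaternion Fourier transform to scalar Fourier transforms: splitting $g\in\mathcal{S}(\mathbb R^2,\mathbb H)$ relative to a complex subfield of $\mathbb H$ expresses $\mathcal{F}_{i,j}\{g\}$ as a combination of ordinary $\mathbb R^2$-Fourier transforms of the components, so that $|\mathcal{F}_{i,j}\{g\}|^2$ and $|g|^2$ decompose as the matching sums of squared moduli; applying the classical inequality to each scalar piece and adding yields, for $0\le\alpha<2$,
\begin{equation*}
\int_{\mathbb R^2}|v|^{-\alpha}\bigl|\mathcal{F}_{i,j}\{g\}(v)\bigr|^2\,dv\;\le\;\pi^{\alpha}\Bigl[\Gamma\!\bigl(\tfrac{2-\alpha}{4}\bigr)\big/\Gamma\!\bigl(\tfrac{2+\alpha}{4}\bigr)\Bigr]^2\int_{\mathbb R^2}|x|^{\alpha}|g(x)|^2\,dx ,
\end{equation*}
with precisely Beckner's sharp constant for the $2\pi$-normalised Fourier kernel in use here.

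Finally I would assemble the two pieces: substitute the modulus identity of the first step into $\int_{\mathbb R^2}|w/b|^{-\alpha}\bigl|\mathcal{O}^{i,j}_{A_1,A_2}\{f\}(w)\bigr|^2\,dw$ and change variables to the QFT frequency $v=\bigl(\tfrac{w_1-\tau_1}{2\pi b_1},\tfrac{w_2-\tau_2}{2\pi b_2}\bigr)$, under which the domain rescales by $4\pi^2|b_1b_2|$ and the weight $|w/b|^{-\alpha}$ turns into $(2\pi)^{-\alpha}|v|^{-\alpha}$ (the offset being harmless for the estimate); then apply the QFT Pitt inequality to $g=\Phi_f$ and use $|\Phi_f|=|f|$. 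The factors $(2\pi)^{-\alpha}$, $\tfrac{1}{4\pi^2|b_1b_2|}$, $4\pi^2|b_1b_2|$ and $\pi^{\alpha}$ recombine into exactly $C_\alpha/4\pi^2$ with $C_\alpha=\tfrac{4\pi^2}{2^{\alpha}}\bigl[\Gamma(\tfrac{2-\alpha}{4})/\Gamma(\tfrac{2+\alpha}{4})\bigr]^2$, which is the assertion.

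I expect the real obstacles to be, first, making the chirp extraction of the first step rigorous in the non-commutative setting --- it pins down the precise left/right positions of $\Lambda_L,\Lambda_R$ and the two-sided form of $\Phi_f$ --- and, second, checking that Beckner's \emph{sharp} constant genuinely survives both the reduction to scalar transforms and the $2\pi$- and $b_k$-rescalings, since it is sharpness rather than mere boundedness that makes $C_\alpha$ meaningful and that is needed for the logarithmic uncertainty inequality subsequently deduced from Pitt's inequality. The remaining ingredients --- Fubini, the Schwartz-class bookkeeping that guarantees convergence of all the integrals, and the elementary phase algebra --- are routine.
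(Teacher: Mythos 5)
The paper does not prove this lemma at all: it is quoted verbatim from the reference \cite{gen} (El Haoui--Hitzer) and used as a black box in the proof of the $\mathbb O$-OLCT Pitt inequality, so there is no in-paper argument to compare yours against. Your route --- strip the chirps to reduce the QOLCT to the two-variable QFT, reduce the QFT to scalar Fourier transforms, and invoke Beckner's sharp Pitt inequality on $\mathbb R^2$ --- is the standard one and is, in outline, how the cited source proceeds; your constant bookkeeping (the factor $(2\pi)^{-\alpha}\cdot\pi^{\alpha}=2^{-\alpha}$, the Jacobian $4\pi^2|b_1b_2|$ cancelling the prefactor $\tfrac{1}{4\pi^2|b_1b_2|}$) does recombine into $C_\alpha/4\pi^2$ as claimed.

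Two points in your write-up are not merely routine, however. First, the displayed factorization $\mathcal O^{i,j}_{A_1,A_2}\{f\}=\Lambda_L\,\mathcal F_{i,j}\{\Phi_f\}\,\Lambda_R$ with unit quaternions on the \emph{outside} is false as stated for the transform defined here (where $f$ is multiplied on the right by $K^{i}_{A_1}$ and then $K^{j}_{A_2}$): the $w_1$-only phase $e^{i\gamma_1(w_1)}$ is trapped between the $i$- and the $j$-exponential, and $e^{i\gamma}e^{j\alpha}$ is not equal to $e^{j\alpha}e^{\pm i\gamma}$, so that factor cannot be extracted to either side of the integral. You must first split $f=f_1+f_2j$ with $f_1,f_2$ valued in $\mathbb C_i=\mathrm{span}\{1,i\}$, prove the modulus identity componentwise, and re-sum using $|f|^2=|f_1|^2+|f_2|^2$; that is, the decomposition you reserve for your second step has to be performed already in the first. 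Second, and more seriously, the parenthetical claim that the offset is harmless is wrong: the substitution $v=\bigl(\tfrac{w_1-\tau_1}{2\pi b_1},\tfrac{w_2-\tau_2}{2\pi b_2}\bigr)$ turns the weight $|w/b|^{-\alpha}$ into $|2\pi v+\tau/b|^{-\alpha}$, not into $(2\pi)^{-\alpha}|v|^{-\alpha}$, and for $\alpha>0$ a power weight with a shifted singularity is not dominated by the unshifted one. As written your argument proves the inequality with the weight $|(w-\tau)/b|^{-\alpha}$; to obtain the stated weight $|w/b|^{-\alpha}$ you need either $\tau=0$ or a further argument (or else the weight in the quoted lemma is itself stated imprecisely relative to the source). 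This is the one step of your plan that actually fails rather than merely requiring care.
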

\begin{theorem}[Pitt's inequality for the $\mathbb O-$OLCT]\label{thmpitt} For $f\in \mathcal S(\mathbb R^3,\mathbb O),$ and     $0\le \alpha <3$ and  under the assumptions of lemma \ref{lem pitt qolct}, we have
\begin{equation}\label{eqnpittolct}
\int_{{\mathbb R}^3}{ {\left|\frac{ w}{b}\right|}^{-\alpha }}{\left|{\mathcal O}^{A_1,A_2,A_3}_{\mu_1,\mu_2,\mu_3}\left\{f(x)\right\}\left( w\right)\right|}^2d w\  \le \frac{C_{\alpha }}{8\pi^3|b_3|}\int_{{\mathbb R}^3}{ {\left| x\right|}^{\alpha }}|f(x)|^2dx.
\end{equation}
\end{theorem}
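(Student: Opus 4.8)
The plan is to reduce the three–dimensional $\mathbb{O}$–OLCT to the quaternion offset linear canonical transform by means of the norm–split machinery of Lemma \ref{lemoctnorm}, apply the QOLCT Pitt inequality of Lemma \ref{lem pitt qolct} to each of the four resulting quaternion components, and then reassemble the pieces using the even/odd parity decomposition in the third variable.

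First I would write $f=g+h\mu_4$ with $g,h\in\mathbb{H}$ and split $g=g_e+g_o$, $h=h_e+h_o$ into their parts even and odd in $x_3$. Exactly as in \eqref{qolct}, this exhibits $\mathcal{O}^{A_1,A_2,A_3}_{\mu_1,\mu_2,\mu_4}\{f\}(w)=A(w)+B(w)\mu_4$, where $A$ carries the $g_e$–(cosine) and $h_o$–(sine) contributions and $B$ the $h_e$–(cosine) and $g_o$–(sine) contributions, each contribution being, up to the factor $(2\pi|b_3|)^{-1/2}$, a two–dimensional QOLCT in $(x_1,x_2)$ paired with a cosine or sine transform in $x_3$. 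The octonion modulus identity \eqref{o6} gives $|\mathcal{O}\{f\}(w)|^2=|A(w)|^2+|B(w)|^2$, so
\begin{equation*}
\int_{\mathbb{R}^3}\left|\tfrac{w}{b}\right|^{-\alpha}\left|\mathcal{O}^{A_1,A_2,A_3}_{\mu_1,\mu_2,\mu_4}\{f\}(w)\right|^2 dw=\int_{\mathbb{R}^3}\left|\tfrac{w}{b}\right|^{-\alpha}|A(w)|^2 dw+\int_{\mathbb{R}^3}\left|\tfrac{w}{b}\right|^{-\alpha}|B(w)|^2 dw .
\end{equation*}
Expanding $|A|^2$ and integrating first in $w_3$, the cross term between the cosine transform of an $x_3$–even function and the sine transform of an $x_3$–odd function vanishes (the weight $|w/b|^{-\alpha}$ being even in $w_3$), leaving $\tfrac{1}{2\pi|b_3|}$ times the sum of the weighted $L^2$–norms of $\mathcal{O}^{i,j}_{A_1,A_2}\{g_e\}$ and $\mathcal{O}^{i,j}_{A_1,A_2}\{h_o\}$; similarly the $B$–term yields $\mathcal{O}^{i,j}_{A_1,A_2}\{h_e\}$ and $\mathcal{O}^{i,j}_{A_1,A_2}\{g_o\}$. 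Applying Lemma \ref{lem pitt qolct} to each of these four transforms, then using that $|x|^\alpha$ is even in $x_3$ (so $\int|x|^\alpha(|g_e|^2+|g_o|^2)\,dx=\int|x|^\alpha|g|^2\,dx$, and likewise for $h$) together with $|f|^2=|g|^2+|h|^2$, produces
\begin{equation*}
\int_{\mathbb{R}^3}\left|\tfrac{w}{b}\right|^{-\alpha}\left|\mathcal{O}^{A_1,A_2,A_3}_{\mu_1,\mu_2,\mu_4}\{f\}(w)\right|^2 dw\le\frac{C_\alpha}{4\pi^2}\cdot\frac{1}{2\pi|b_3|}\int_{\mathbb{R}^3}|x|^\alpha|f(x)|^2\,dx=\frac{C_\alpha}{8\pi^3|b_3|}\int_{\mathbb{R}^3}|x|^\alpha|f(x)|^2\,dx ,
\end{equation*}
which is the asserted inequality; note the extra factor $\tfrac{1}{2\pi|b_3|}$ is precisely the one appearing in the energy identity of Theorem \ref{thenergy}, coming from Plancherel in the $x_3$–variable.

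I expect the main obstacle to be the honest bookkeeping of the third ($x_3$, $w_3$) pair. Lemma \ref{lem pitt qolct} is a genuinely two–dimensional estimate, valid only for $0\le\alpha<2$, whereas the range claimed here is $0\le\alpha<3$; covering it requires a Pitt–type bound for the combined object ``2D QOLCT in $(x_1,x_2)$ tensored with the 1D (co)sine OLCT in $x_3$'', i.e.\ one must also invoke a one–dimensional weighted estimate in $w_3$ and check that the full weight $|w/b|^{-\alpha}$ on $\mathbb{R}^3$ is dominated by the tensor structure. Establishing (or isolating from the literature) that three–dimensional QOLCT Pitt inequality, and verifying that the $w_3$–integration/orthogonality step survives the presence of the weight $|w/b|^{-\alpha}$, are the two delicate points; once they are in place, the remaining reassembly via the norm split and the parity decomposition is routine.
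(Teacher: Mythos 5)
Your route is the same as the paper's: split $f=g+h\mu_4$ into the four parity components $g_e,g_o,h_e,h_o$, use the norm split of Lemma \ref{lemoctnorm} to write the weighted integral of $|\mathcal O^{A_1,A_2,A_3}_{\mu_1,\mu_2,\mu_4}\{f\}|^2$ as $\tfrac{1}{2\pi|b_3|}$ times a sum of four weighted quaternionic integrals, apply the QOLCT Pitt inequality of Lemma \ref{lem pitt qolct} to each, and recombine via $|g_e|^2+|g_o|^2+|h_e|^2+|h_o|^2=|f|^2$. Your remark that the cosine/sine cross terms vanish under the $w_3$-integration because the weight is even in $w_3$ is a point the paper leaves implicit, and it is correct.

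The obstacle you flag at the end is not a hypothetical worry: it is exactly the step the paper's proof skips. Lemma \ref{lem pitt qolct} is a two-dimensional estimate, for functions on $\mathbb R^2$ with a two-dimensional weight and the restriction $0\le\alpha<2$; the paper nevertheless asserts $\int_{\mathbb R^3}\left|\tfrac{w}{b}\right|^{-\alpha}\bigl|\mathcal O^{i,j}_{A_1,A_2}\{g_e\}(w)\bigr|^2dw\le \tfrac{C_\alpha}{4\pi^2}\int_{\mathbb R^3}|x|^\alpha|g_e(x)|^2dx$ with three-dimensional integrals, a three-dimensional weight, and the enlarged range $0\le\alpha<3$, offering no justification for the third $(x_3,w_3)$ pair. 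Each of the four pieces is a 2D QOLCT in $(x_1,x_2)$ tensored with a cosine or sine transform in $x_3$, so one genuinely needs a three-dimensional Pitt-type bound (or a one-dimensional weighted estimate in $w_3$ together with an argument that the non-separable weight $\left|\tfrac{w}{b}\right|^{-\alpha}$ is compatible with the tensor structure); neither you nor the paper supplies this. In short, your proposal reproduces the paper's argument, gap included -- the difference is that you identify the gap rather than silently stepping over it. To close it one must actually establish the three-dimensional weighted inequality you mention; it does not follow formally from the two-dimensional lemma as cited.
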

\begin{proof}
 We have split $\mathbb O-$OLCT into four QOLCT in lemma \ref{lemoctnorm}, therefore
\begin{eqnarray}\label{ppp}
\nonumber\int_{{\mathbb R}^3}{ {\left|\frac{ w}{b}\right|}^{-\alpha }}{\left|{\mathcal O}^{A_1,A_2,A_3}_{\mu_1,\mu_2,\mu_3}\left\{f(x)\right\}\left( w\right)\right|}^2d w&=&\frac{1}{2\pi|b_3|}\left(\int_{{\mathbb R}^3}{ {\left|\frac{ w}{b}\right|}^{-\alpha }}|\mathcal O^{A_1,A_2}_{\mu_1,\mu_2}\{g_e\}(w)|^2d w\right.\\
\nonumber&&+\left.\int_{{\mathbb R}^3}{ {\left|\frac{ w}{b}\right|}^{-\alpha }}|\mathcal O^{A_1,A_2}_{\mu_1,\mu_2}\{g_o\}(w)|^2d w\right.\\
\nonumber&&+\left.\int_{{\mathbb R}^3}{ {\left|\frac{ w}{b}\right|}^{-\alpha }}|\mathcal O^{A_1,A_2}_{\mu_1,\mu_2}\{h_e\}(w)|^2d w\right.\\
\label{ppp}&&+\left.\int_{{\mathbb R}^3}{ {\left|\frac{ w}{b}\right|}^{-\alpha }}|\mathcal O^{A_1,A_2}_{\mu_1,\mu_2}\{g_e\}(w)|^2d w\right)
\end{eqnarray}
By the Pitt's inequality for QOLCT (\ref{eqnpitt qolct}), we have

\begin{equation}
\int_{{\mathbb R}^3}{ {\left|\frac{ w}{b}\right|}^{-\alpha }}{\left|{\mathcal O}^{i,j}_{A_1,A_2}\left\{g_e(x)\right\}\left( w\right)\right|}^2d w\  \le \frac{ C_{\alpha }}{4\pi^2} \int_{{\mathbb R}^3}{ {\left| x\right|}^{\alpha }}{\left|f( x)\right|}^2\ d x.
\end{equation}
As similar inequalities hold for the remaining three terms. Collecting all and inserting in (\ref{ppp}), we get
\begin{eqnarray*}
\nonumber\int_{{\mathbb R}^3}{ {\left|\frac{ w}{b}\right|}^{-\alpha }}{\left|{\mathcal O}^{A_1,A_2,A_3}_{\mu_1,\mu_2,\mu_3}\left\{f(x)\right\}\left( w\right)\right|}^2d w &\le&\frac{C_{\alpha }}{8\pi^3|b_3|}\left(\int_{{\mathbb R}^3}{ {\left| x\right|}^{\alpha }}{\left|g_e\right|}^2\ d x
+\int_{{\mathbb R}^3}{ {\left| x\right|}^{\alpha }}{\left|g_o( x)\right|}^2\ d x\right.\\
&&+\left.\int_{{\mathbb R}^3}{ {\left| x\right|}^{\alpha }}{\left|h_e( x)\right|}^2\ d x
+\int_{{\mathbb R}^3}{ {\left| x\right|}^{\alpha }}{\left|h_o( x)\right|}^2\ d x\right)\\
&=&\frac{C_{\alpha }}{8\pi^3|b_3|}\int_{{\mathbb R}^3}{ {\left| x\right|}^{\alpha }}\left(|g_e( x)|^2+|g_o( x)|^2+|h_e( x)|^2+|h_o( x)|^2\right)dx.\\
&=&\frac{C_{\alpha }}{8\pi^3|b_3|}\int_{{\mathbb R}^3}{ {\left| x\right|}^{\alpha }}|f(x)|^2dx
\end{eqnarray*}
where last equality occurs because of (\ref{evenoddfun}).\\
Which completes the proof.
\end{proof}
Here $C_\alpha$ can't be smaller any more. It is equal to the  ordinary complex and the quaternion  cases. Thus, the inequality is sharp.
 If $\alpha=0$, it changes to equality, at $\alpha=0$  differentiating the sharp Pitt’s inequalities led to the following logarithmic uncertainty inequality for the $\mathbb O-$OLCT.

\begin{theorem}[Logarithmic uncertainty principle for the $\mathbb O-$OLCT]\label{thlogoolct} Let $f\in \mathcal S(\mathbb R^3,\mathbb O),$ then the following inequality is satisfied:
\begin{equation}\label{eqnlogoolct}
2\pi|b_3|\int_{{\mathbb R}^2}\ln\left|\frac{w}{b}\right|\left|{\mathcal O}^{A_1,A_2,A_3}_{\mu_1,\mu_2,\mu_3}\{f\}( w)\right|^2d w + \int_{{\mathbb R}^2}\ln|x| |f(x)|^2dx\ge D \int_{{\mathbb R}^2} |f(x)|^2dx
\end{equation}
with $D=\ln(2)+\Gamma'(\frac{1}{2})/\Gamma(\frac{1}{2}).$
\end{theorem}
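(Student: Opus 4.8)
The plan is to obtain the logarithmic inequality by differentiating the sharp Pitt inequality of Theorem \ref{thmpitt} in the exponent $\alpha$ at the endpoint $\alpha=0$, in the same way one derives the logarithmic uncertainty principle from Beckner's Pitt inequality in the classical Fourier setting. Concretely, for $0\le\alpha<3$ I would introduce
\[
\Phi(\alpha)=\frac{C_\alpha}{8\pi^3|b_3|}\int_{\mathbb R^3}|x|^{\alpha}|f(x)|^2\,dx-\int_{\mathbb R^3}\left|\frac{w}{b}\right|^{-\alpha}\left|\mathcal O^{A_1,A_2,A_3}_{\mu_1,\mu_2,\mu_3}\{f\}(w)\right|^2\,dw ,
\]
so that Theorem \ref{thmpitt} says exactly $\Phi(\alpha)\ge 0$ on $[0,3)$. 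The crucial point is that $\Phi(0)=0$: since $C_0=4\pi^2$ (because $\Gamma(1/2)/\Gamma(1/2)=1$), the first term at $\alpha=0$ is $\tfrac{1}{2\pi|b_3|}\|f\|_2^2$, which by the energy conservation relation of Theorem \ref{thenergy} equals $\|\mathcal O^{A_1,A_2,A_3}_{\mu_1,\mu_2,\mu_3}\{f\}\|_2^2$, the second term at $\alpha=0$. Hence $\alpha=0$ minimizes $\Phi$ over $[0,3)$, so the right derivative obeys $\Phi'(0^{+})\ge 0$.

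Next I would evaluate $\Phi'(0^{+})$ by differentiating under the integral sign. Using $\frac{d}{d\alpha}|x|^{\alpha}=|x|^{\alpha}\ln|x|$, $\frac{d}{d\alpha}|w/b|^{-\alpha}=-|w/b|^{-\alpha}\ln|w/b|$, and the digamma computation
\[
\frac{d}{d\alpha}\ln C_\alpha\Big|_{\alpha=0}=\frac{d}{d\alpha}\Big[\ln(4\pi^2)-\alpha\ln 2+2\ln\Gamma\big(\tfrac{2-\alpha}{4}\big)-2\ln\Gamma\big(\tfrac{2+\alpha}{4}\big)\Big]_{\alpha=0}=-\ln 2-\frac{\Gamma'(1/2)}{\Gamma(1/2)},
\]
one finds
\[
2\pi|b_3|\,\Phi'(0^{+})=\Big(-\ln 2-\tfrac{\Gamma'(1/2)}{\Gamma(1/2)}\Big)\int_{\mathbb R^3}|f|^2\,dx+\int_{\mathbb R^3}\ln|x|\,|f(x)|^2\,dx+2\pi|b_3|\int_{\mathbb R^3}\ln\left|\frac{w}{b}\right|\left|\mathcal O^{A_1,A_2,A_3}_{\mu_1,\mu_2,\mu_3}\{f\}(w)\right|^2\,dw .
\]
Imposing $\Phi'(0^{+})\ge 0$ and rearranging gives precisely (\ref{eqnlogoolct}) with $D=\ln 2+\Gamma'(1/2)/\Gamma(1/2)$.

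The main obstacle is the analytic justification hidden in the phrase "differentiate the inequality": one must show that $\Phi$ is right-differentiable at $0$ and that its derivative is obtained by differentiation under the integral. This needs a dominated-convergence estimate valid uniformly for $\alpha$ in a neighbourhood of $0$, together with the absolute convergence of the logarithmic integrals $\int\ln|x|\,|f|^2$ and $\int\ln|w/b|\,|\mathcal O\{f\}|^2$; the hypothesis $f\in\mathcal S(\mathbb R^3,\mathbb O)$ --- which forces rapid decay and smoothness of $\mathcal O^{A_1,A_2,A_3}_{\mu_1,\mu_2,\mu_3}\{f\}$ through Lemma \ref{lemrelation} and the corresponding property of the $\mathbb O-$FT --- is exactly what supplies these bounds. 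Once this is in place, the remaining ingredients (nonnegativity and vanishing of $\Phi$ at $0$ from Theorems \ref{thmpitt} and \ref{thenergy}, and the elementary digamma identity above) are routine, and the sharpness of $C_\alpha$ inherited from the quaternion and complex cases transfers to optimality of the constant $D$.
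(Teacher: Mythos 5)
Your argument is correct and your endpoint computations check out: $C_0=4\pi^2$ makes $\Phi(0)=0$ via the energy conservation of Theorem \ref{thenergy}, the digamma derivative $-\ln 2-\Gamma'(1/2)/\Gamma(1/2)$ is right, and imposing $\Phi'(0^{+})\ge 0$ reproduces \eqref{eqnlogoolct} exactly. However, while the paper nods at this differentiation-of-Pitt route in a single sentence (referring to the analogous QOLCT argument), the proof it actually writes out is different: it uses the quaternionic decomposition $f=g+h\mu_4$ with even/odd parts, the norm split of Lemma \ref{lemoctnorm} to write $\left|\mathcal O^{A_1,A_2,A_3}_{\mu_1,\mu_2,\mu_4}\{f\}\right|^2$ as $\frac{1}{2\pi|b_3|}$ times a sum of four QOLCT moduli of $g_e,g_o,h_e,h_o$, applies the known logarithmic uncertainty principle for the QOLCT (Lemma \ref{lem log qolct}) to each of the four pieces, and sums using \eqref{evenoddfun}. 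The trade-off is instructive. The paper's route black-boxes all the delicate analysis (differentiability in $\alpha$, domination, convergence of the logarithmic integrals) into the cited QOLCT lemma, so the octonionic step is purely algebraic bookkeeping; the cost is dependence on that external result and on the splitting identities. Your route is self-contained modulo Theorem \ref{thmpitt} and Theorem \ref{thenergy}, both proved in the paper, but it carries the analytic burden you correctly identify: you must justify differentiation under the integral sign at $\alpha=0^{+}$ and the absolute convergence of $\int\ln|x|\,|f|^2$ and $\int\ln|w/b|\,|\mathcal O\{f\}|^2$ near the origin and at infinity, which the Schwartz hypothesis supplies but which you would need to spell out (the near-origin integrability of $\ln|w/b|\,|\mathcal O\{f\}(w)|^2$ is the only point requiring a word, and it follows since $\ln$ is locally integrable in dimension $3$ and $\mathcal O\{f\}$ is bounded). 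One further caveat: your closing claim that sharpness of $C_\alpha$ transfers to optimality of $D$ is not justified by either argument and should be dropped or proved separately; differentiating a sharp inequality at a point of equality does not automatically yield a sharp inequality.
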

\begin{proof}
 Following the  procedure of theorem 4.11 in \cite{gen} we will get desired result.

Alternatively, we can prove Logarithmic uncertainty principle for the $\mathbb O-$OLCT from the Logarithmic uncertainty principle for the QOLCT \cite{gen}.\\
\begin{lemma}[Logarithmic uncertainty principle for the QOLCT]\label{lem log qolct}\cite{gen}\ \\
Let$f\in \mathcal S(\mathbb R^2,\mathbb H),$ then
\begin{equation}\label{eqnlogoolct}
\int_{{\mathbb R}^2}\ln\left|\frac{w}{b}\right|\left|{\mathcal O}^{A_1,A_2}_{\mu_1,\mu_2}\{f\}( w)\right|^2d w + \int_{{\mathbb R}^2}\ln|x| |f(x)|^2dx\ge D \int_{{\mathbb R}^2} |f(x)|^2dx
\end{equation}
with $D=\ln(2)+\Gamma'(\frac{1}{2})/\Gamma(\frac{1}{2}).$
\end{lemma}
{\bf Proof of theorem \ref{thlogoolct}}\\
By lemma \ref{lemoctnorm}, $\mathbb O-$OLCT can be written in split quaternion form as
\begin{eqnarray*}
\|\mathcal O^{A_1,A_2,A_3}_{\mu_1,\mu_2,\mu_4}\{f\}(w)\|^2&=&\frac{1}{2\pi|b_3|}\left(\|\mathcal O^{A_1,A_2}_{\mu_1,\mu_2}\{g_e\}(w)\|^2+\|\mathcal O^{A_1,A_2}_{\mu_1,\mu_2}\{h_o\}(w)\|^2\right.\\
&&+\left.\|\mathcal O^{A_1,A_2}_{\mu_1,\mu_2}\{h_e\}(w)\|^2+\|\mathcal O^{A_1,A_2}_{\mu_1,\mu_2}\{g_o\}(w)\|^2\right).\\
\end{eqnarray*}
Therefore
\begin{eqnarray*}
\int_{\mathbb R^2}\ln\left|\frac{w}{b}\right|\|\mathcal O^{A_1,A_2,A_3}_{\mu_1,\mu_2,\mu_4}\{f\}(w)\|^2dw&=&\frac{1}{2\pi|b_3|}\int_{\mathbb R^2}\ln\left|\frac{w}{b}\right|\left(\|\mathcal O^{A_1,A_2}_{\mu_1,\mu_2}\{g_e\}(w)\|^2+\|\mathcal O^{A_1,A_2}_{\mu_1,\mu_2}\{h_o\}(w)\|^2\right.\\
&&+\left.\|\mathcal O^{A_1,A_2}_{\mu_1,\mu_2}\{h_e\}(w)\|^2+\|\mathcal O^{A_1,A_2}_{\mu_1,\mu_2}\{g_o\}(w)\|^2\right)dw.\\
\end{eqnarray*}
Implies
\begin{eqnarray}\label{logeqn split}
\nonumber&&2\pi|b_3|\int_{\mathbb R^2}\ln\left|\frac{w}{b}\right|\|\mathcal O^{A_1,A_2,A_3}_{\mu_1,\mu_2,\mu_4}\{f\}(w)\|^2dw\\\
\nonumber&&\quad=\int_{\mathbb R^2}\ln\left|\frac{w}{b}\right|\|\mathcal O^{A_1,A_2}_{\mu_1,\mu_2}\{g_e\}(w)\|^2dw+\int_{\mathbb R^2}\ln\left|\frac{w}{b}\right|\|\mathcal O^{A_1,A_2}_{\mu_1,\mu_2}\{h_o\}(w)\|^2dw.\\\
\nonumber&&\quad\quad+\int_{\mathbb R^2}\ln\left|\frac{w}{b}\right|\|\mathcal O^{A_1,A_2}_{\mu_1,\mu_2}\{h_e\}(w)\|^2dw+\int_{\mathbb R^2}\ln\left|\frac{w}{b}\right|\|\mathcal O^{A_1,A_2}_{\mu_1,\mu_2}\{g_o\}(w)\|^2dw\\\
\end{eqnarray}
Also by virtue of  (\ref{evenoddfun}), we can write
\begin{eqnarray}\label{logeqn split1}
\nonumber\int_{\mathbb R^2}\ln|x||f(x)|^2dx&=&\int_{\mathbb R^2}\ln|x||g_e(x)|^2dx+\int_{\mathbb R^2}\ln|x||g_o(x)|^2dx\\
\nonumber&&+\int_{\mathbb R^2}\ln|x||h_e(x)|^2dx+\int_{\mathbb R^2}\ln|x||h_o(x)|^2dx.\\
\end{eqnarray}
By  logarithmic uncertainty principle for the QOLCT given in lemma \ref{lem log qolct}, we get
\begin{equation}\label{l1}
\int_{\mathbb R^2}\ln\left|\frac{w}{b}\right|\|\mathcal O^{A_1,A_2}_{\mu_1,\mu_2}\{g_e\}(w)\|^2+\int_{\mathbb R^2}\ln|x||g_e(x)|^2dx\ge D\int_{\mathbb R^2}|g_e(x)|^2dx.
\end{equation}
Similarly  ,
\begin{equation}\label{l2}
\int_{\mathbb R^2}\ln\left|\frac{w}{b}\right|\|\mathcal O^{A_1,A_2}_{\mu_1,\mu_2}\{g_e\}(w)\|^2+\int_{\mathbb R^2}\ln|x||g_e(x)|^2dx\ge D\int_{\mathbb R^2}|g_o(x)|^2dx,
\end{equation}
\begin{equation}\label{l3}
\int_{\mathbb R^2}\ln\left|\frac{w}{b}\right|\|\mathcal O^{A_1,A_2}_{\mu_1,\mu_2}\{g_e\}(w)\|^2+\int_{\mathbb R^2}\ln|x||g_e(x)|^2dx\ge D\int_{\mathbb R^2}|h_e(x)|^2dx.
\end{equation}
\begin{equation}\label{l4}
\int_{\mathbb R^2}\ln\left|\frac{w}{b}\right|\|\mathcal O^{A_1,A_2}_{\mu_1,\mu_2}\{g_e\}(w)\|^2+\int_{\mathbb R^2}\ln|x||g_e(x)|^2dx\ge D\int_{\mathbb R^2}|h_o(x)|^2dx.
\end{equation}
Collecting all equations (\ref{l1}, \ref{l2}, \ref{l3}, \ref{l4}) and making use of (\ref{logeqn split}, \ref{logeqn split1}), we obtain the desired result
\begin{equation}
2\pi|b_3|\int_{{\mathbb R}^2}\ln\left|\frac{w}{b}\right|\left|{\mathcal O}^{A_1,A_2,A_3}_{\mu_1,\mu_2,\mu_3}\{f\}( w)\right|^2d w + \int_{{\mathbb R}^2}\ln|x| |f(x)|^2dx\ge D \int_{{\mathbb R}^2} |f(x)|^2dx.
\end{equation}
Which completes the proof
\end{proof}
\subsection{Hausdorff-Young inequality  for  $\mathbb O-$OLCT}\ \\
In this subsection we will establish Hausdorff-Young inequality which is very important in signal processing. This inequality will be helpful for researchers in establishing  Shannon's entropy uncertainty relation.

\begin{lemma}[Hausdorff-Young inequality for QOLCT]\label{lem hdrfolct}\cite{ucp 2sd qolct} For $1\le p\le 2$ and $\frac{1}{p}+\frac{1}{q}=1,$ we have
\begin{equation}\label{eqnhdrof olct}
\|\mathcal O^{A_1,A_2}_{\mu_1,\mu_2,}\{f\}(w)\|_q\le (2\pi)^{\frac{1}{q}-\frac{1}{p}} |b_1b_2|^{\frac{1}{q}-\frac{1}{2}}\|f(x)\|_p.
\end{equation}
\end{lemma}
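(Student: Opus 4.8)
The plan is to derive the Hausdorff--Young inequality for the QOLCT by Riesz--Thorin interpolation between the endpoint exponents $p=1$ and $p=2$, after first rewriting the QOLCT as a dilated, chirp--modulated quaternion Fourier transform in the spirit of Lemma~\ref{lemrelation}. Concretely, in each coordinate $x_k$ one factors the kernel $K_{A_k}$ into (i) a unimodular quadratic--phase chirp depending only on $x_k$, which multiplies $f$ and hence leaves every $L^p$ norm of $f$ unchanged, (ii) a unimodular phase depending only on $w_k$, which likewise does not affect $\|\mathcal O^{A_1,A_2}_{\mu_1,\mu_2}\{f\}\|_q$, and (iii) the Fourier kernel evaluated at the dilated frequency $w_k/(2\pi b_k)$ together with the scalar prefactor $1/\sqrt{2\pi|b_k|}$. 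This reduces the problem to the known Hausdorff--Young inequality for the QFT together with a routine change of variables.

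For the first endpoint ($p=1$, $q=\infty$): since $|K^{\mu_1}_{A_1}(x_1,w_1)\,K^{\mu_2}_{A_2}(x_2,w_2)|=\frac{1}{2\pi\sqrt{|b_1b_2|}}$ for all $x,w$, the triangle inequality applied to the defining integral gives $\|\mathcal O^{A_1,A_2}_{\mu_1,\mu_2}\{f\}\|_\infty\le \frac{1}{2\pi\sqrt{|b_1b_2|}}\,\|f\|_1=(2\pi)^{-1}|b_1b_2|^{-1/2}\|f\|_1$, which is precisely the asserted bound at $p=1$. For the second endpoint ($p=q=2$): the Plancherel/energy identity for the QOLCT (the quaternionic analogue of Theorem~\ref{thenergy}, in which the dilations and the prefactors from step (iii) combine with Plancherel for the QFT) yields $\|\mathcal O^{A_1,A_2}_{\mu_1,\mu_2}\{f\}\|_2=\|f\|_2$, i.e. the asserted bound at $p=2$ with constant $1$.

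With these two endpoint estimates in hand, I would invoke Riesz--Thorin interpolation along the line determined by $\theta\in[0,1]$, $1/p=1-\theta/2$, $1/q=\theta/2$. Multiplicativity of the operator norm gives $\|\mathcal O^{A_1,A_2}_{\mu_1,\mu_2}\{f\}\|_q\le\big((2\pi)^{-1}|b_1b_2|^{-1/2}\big)^{1-\theta}\,\|f\|_p$, and since $1-\theta=2/q-1=-(1/q-1/p)$ one reads off the constant $(2\pi)^{1/q-1/p}|b_1b_2|^{1/q-1/2}$, exactly as claimed.

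The one genuinely delicate point is that Riesz--Thorin is classically formulated for $\mathbb C$--linear operators, whereas the QOLCT is only $\mathbb R$--linear because the kernel multiplies $f$ on one side and $\mathbb H$ is noncommutative. I would circumvent this either by decomposing $f=f_0+f_1\mu_1+f_2\mu_2+f_3\mu_1\mu_2$ into its four real components and tracking the $\mathbb R$--linear action of the kernel, so that a vector--valued version of Riesz--Thorin applies, or, more economically, by taking the Hausdorff--Young inequality for the QFT (a Babenko--Beckner--type estimate with constant $1$ in this normalization, obtained by factoring the $2$D QFT into two one--dimensional complex transforms) as the known input and transporting it through the elementary chirp/dilation identity of the first paragraph; the latter is the route I would actually carry out, and it is in essence the argument recorded in \cite{ucp 2sd qolct}. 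In either case the only remaining work is bookkeeping of the constants arising from the dilations $w_k\mapsto w_k/(2\pi b_k)$.
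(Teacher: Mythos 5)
The paper does not actually prove this lemma: it is imported verbatim as a known result from \cite{ucp 2sd qolct}, so there is no in-paper argument to compare yours against. Your proposal supplies a genuine proof, and the overall strategy (strip the chirps and the $w$-dependent unimodular phase, reduce to a dilated QFT, establish the $L^1\to L^\infty$ bound from $|K_{A_1}^{\mu_1}K_{A_2}^{\mu_2}|=\tfrac{1}{2\pi\sqrt{|b_1b_2|}}$ and the $L^2\to L^2$ bound from the QOLCT Plancherel identity that the paper itself uses in its norm-split lemma, then interpolate) is sound, and your constant bookkeeping under the dilation $w_k\mapsto w_k/(2\pi b_k)$ does reproduce $(2\pi)^{1/q-1/p}|b_1b_2|^{1/q-1/2}$. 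Two points deserve care. First, there is an algebra slip in the interpolation step: with $1/q=\theta/2$ one has $1-\theta=1-2/q=-(1/q-1/p)$, not $2/q-1$; your chain of equalities is internally inconsistent there, although the constant you finally read off is the correct one. Second, the $\mathbb R$-linearity obstruction to Riesz--Thorin that you flag is real, and of your two proposed fixes the component decomposition is the weaker one: summing four (or sixteen) scalar Hausdorff--Young estimates for the real components degrades the constant by a fixed factor, so it yields the inequality only up to an absolute constant rather than with the stated constant. The second route --- taking the Hausdorff--Young inequality for the two-sided QFT as the known input and transporting it through the chirp/dilation identity (the quaternionic analogue of Lemma~\ref{lemrelation}) --- avoids this and is the one you should actually carry out; it is almost certainly what the cited source does. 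With that choice made explicit, your argument is complete and arguably more informative than the paper, which gives the reader no proof at all.
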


\begin{theorem}[Hausdorff-Young inequality for $\mathbb O-$OLCT]\label{th hdrfoolct} For $1\le p\le 2$ and $\frac{1}{p}+\frac{1}{q}=1,$ we have
\begin{equation}\label{eqnhdrof oolct}
\|\mathcal O^{A_1,A_2,A_3}_{\mu_1,\mu_2,\mu_4}\{f\}(w)\|_q\le {(2\pi)^{\frac{1}{2q}-\frac{1}{p}}} |b_1b_2|^{\frac{1}{q}-\frac{1}{2}}|b_3|^{-\frac{1}{2q}}\|f(x)\|_p.
\end{equation}
\end{theorem}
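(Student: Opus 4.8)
The plan is to transfer the inequality from the two-dimensional quaternionic setting by the same mechanism already used for Theorems \ref{thmpitt} and \ref{thlogoolct}: reduce the three-dimensional $\mathbb O$-OLCT to quaternion offset linear canonical transforms through the splitting of Lemma \ref{lemoctnorm}, apply the Hausdorff--Young inequality for the QOLCT (Lemma \ref{lem hdrfolct}) to each piece, and then reassemble. Writing $f=g+h\mu_4$ with $g,h\in\mathbb H$ and splitting $g,h$ into their even and odd parts $g_e,g_o,h_e,h_o$ in the variable $x_3$, the identity (\ref{qolct}) exhibits $\mathcal O^{A_1,A_2,A_3}_{\mu_1,\mu_2,\mu_4}\{f\}(w)$ as $\frac{1}{\sqrt{2\pi|b_3|}}$ times a sum of four terms: two making up the $\mathbb H$-valued part and two (carried by the factor $\mu_4$) the complementary part, each being a two-dimensional QOLCT in $(x_1,x_2)$ of one of $g_e,h_o,h_e,g_o$ together with the factor $\cos\xi_3$ or $\sin\xi_3$ and the $x_3$-integration.

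First I would take the $L^q$-norm of this representation. Using $|a+b\mu_4|^2=|a|^2+|b|^2$ from (\ref{o6}) and the triangle inequality in $L^q$, the norm $\|\mathcal O^{A_1,A_2,A_3}_{\mu_1,\mu_2,\mu_4}\{f\}\|_q$ is dominated by $\frac{1}{\sqrt{2\pi|b_3|}}$ times a sum of the $L^q$-norms of the four QOLCT pieces, where the $\cos\xi_3,\sin\xi_3$ factors together with the $x_3$-integration are disposed of as in the proof of Lemma \ref{lemoctnorm}, this being what produces the residual power of $|b_3|$. To each of the four pieces I would then apply Lemma \ref{lem hdrfolct} with the exponents $p,q$, obtaining $\|\mathcal O^{A_1,A_2}_{\mu_1,\mu_2}\{g_e\}\|_q\le (2\pi)^{\frac{1}{q}-\frac{1}{p}}|b_1b_2|^{\frac{1}{q}-\frac{1}{2}}\|g_e\|_p$ and the three analogous estimates for $h_o,h_e,g_o$.

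Finally, since $|g(x)|\le|f(x)|$ and $|h(x)|\le|f(x)|$ by (\ref{o6}) and $t\mapsto t^p$ is convex for $p\ge1$, each of $\|g_e\|_p,\|h_o\|_p,\|h_e\|_p,\|g_o\|_p$ is bounded above by $\|f\|_p$ (after the obvious change of variable $x_3\mapsto -x_3$); inserting these bounds and collecting the powers of $2\pi$, $|b_1b_2|$ and $|b_3|$ gives (\ref{eqnhdrof oolct}). A variant of the argument runs as follows: factor $\mathcal O^{A_1,A_2,A_3}_{\mu_1,\mu_2,\mu_4}$ as a two-dimensional QOLCT in $(x_1,x_2)$ carried out for each fixed $x_3$, followed by a one-dimensional $\mathbb O$-OLCT in $x_3$; apply Lemma \ref{lem hdrfolct} and the one-dimensional Hausdorff--Young inequality (which follows from Lemma \ref{lemrelation} and Hausdorff--Young for the $\mathbb O$-FT) to the two blocks; and interchange the outer $L^q$-norm in $(w_1,w_2)$ with the inner $L^p$-norm in $x_3$ by Minkowski's integral inequality, which is permitted because $p\le q$.

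The delicate step is the constant bookkeeping in the reassembly. Unlike the $L^2$ situation of Theorem \ref{thenergy}, where the four pieces are genuinely orthogonal and the $x_3$-factor is removed by Plancherel, here the $L^q$-norm must pass through both the triangle inequality and the $x_3$-integration against $\cos\xi_3,\sin\xi_3$ without loss, so that the exponent of $2\pi$ contracts from $\frac{1}{q}-\frac{1}{p}$ to $\frac{1}{2q}-\frac{1}{p}$ and exactly the factor $|b_3|^{-\frac{1}{2q}}$ emerges; and each such manipulation has to respect the non-associativity of octonion multiplication, so the left-to-right order of the kernel factors inside every integral must be kept track of throughout.
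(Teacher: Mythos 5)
Your overall strategy --- split the $\mathbb O-$OLCT into four QOLCT pieces via (\ref{qolct}), apply Lemma \ref{lem hdrfolct} to each, and reassemble --- is exactly the route the paper takes, and the ``delicate step'' you flag at the end is not a bookkeeping issue that careful tracking will resolve: it is a genuine gap, and it is the same gap that sits in the paper's own proof. Concretely: (i) the splitting (\ref{qolct}) carries the scalar prefactor $(2\pi|b_3|)^{-1/2}$, and for a positive scalar $\lambda$ one has $\|\lambda F\|_q=\lambda\|F\|_q$, not $\lambda^{1/q}\|F\|_q$, so no passage through the triangle inequality can convert $(2\pi|b_3|)^{-1/2}$ into the factor $(2\pi|b_3|)^{-1/(2q)}$ that the statement requires (the paper's first display performs exactly this illegitimate substitution, asserting it as an equality). (ii) The four pieces are not two-dimensional QOLCTs evaluated on $\mathbb R^2$: each still contains the $x_3$-integration against $\cos\xi_3$ or $\sin\xi_3$ with nontrivial $w_3$-dependence, so Lemma \ref{lem hdrfolct} alone does not control their $L^q(\mathbb R^3,dw)$-norms; a one-dimensional Hausdorff--Young estimate in the third variable is unavoidable, which is precisely your Minkowski-inequality variant. (iii) The reassembly $\|g_e\|_p+\|g_o\|_p+\|h_e\|_p+\|h_o\|_p\le\|f\|_p$ fails for $p\neq2$: each summand is at most $\|f\|_p$, so the sum is only bounded by $4\|f\|_p$, and the identity (\ref{evenoddfun}) invoked by the paper at this point is an $L^2$ Pythagoras identity with no $L^p$ analogue.

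Worse, the constant in (\ref{eqnhdrof oolct}) cannot be reached by any correct argument: at $p=q=2$ the right-hand side equals $(2\pi|b_3|)^{-1/4}\|f\|_2$, while Theorem \ref{thenergy} gives $\|\mathcal O^{A_1,A_2,A_3}_{\mu_1,\mu_2,\mu_4}\{f\}\|_2=(2\pi|b_3|)^{-1/2}\|f\|_2$, which exceeds the claimed bound whenever $2\pi|b_3|<1$. The sound implementation of your variant --- interpolating between the $L^1\to L^\infty$ bound with constant $(2\pi)^{-3/2}|b_1b_2b_3|^{-1/2}$ (the supremum of the kernel modulus, which is well defined despite non-associativity because the octonion norm is multiplicative) and the $L^2\to L^2$ bound of Theorem \ref{thenergy}, or equivalently iterating the one-dimensional Hausdorff--Young through Minkowski's integral inequality --- would yield $(2\pi)^{\frac{2}{q}-\frac{3}{2}}|b_1b_2|^{\frac{1}{q}-\frac{1}{2}}|b_3|^{-\frac{1}{2}}\|f\|_p$ on the right, a different constant that does reduce correctly at the endpoints. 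So you have located the weak point accurately, but the step cannot be completed as you (and the paper) propose, and the inequality should be restated with the corrected constant.
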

\begin{proof}
From (\ref{qolct}), we obtain
\begin{eqnarray*}
\|\mathcal O^{A_1,A_2,A_3}_{\mu_1,\mu_2,\mu_4}\{f\}(w)\|_q&=&\frac{1}{(2\pi|b_3|)^{\frac{1}{2q}}}\left(\|\mathcal O^{A_1,A_2}_{\mu_1,\mu_2}\{g_e\}(w)\|_q+\mathcal O^{A_1,A_2}_{\mu_1,\mu_2}\{g_o\}(w)\|_q\right.\\
&&+\left.\mathcal O^{A_1,A_2}_{\mu_1,\mu_2}\{h_e\}(w)\|_q+\mathcal O^{A_1,A_2}_{\mu_1,\mu_2}\{h_o\}(w)\|_q\right)
\end{eqnarray*}
Now applying lemma \ref{lem hdrfolct} to the R.H.S of above equation, we obtain
\begin{eqnarray*}
\|\mathcal O^{A_1,A_2,A_3}_{\mu_1,\mu_2,\mu_4}\{f\}(w)\|_q&\le&\frac{1}{(2\pi|b_3|)^{\frac{1}{2q}}}(2\pi)^{\frac{1}{q}-\frac{1}{p}} |b_1b_2|^{\frac{1}{q}-\frac{1}{2}}\left(\|g_e(x)\|_p+\|g_o(x)\|_p\right.\\
&&\left.+\|h_e(x)\|_p+\|h_o(x)\|_p\right).\\
&=&{(2\pi)^{\frac{1}{2q}-\frac{1}{p}}} |b_1b_2|^{\frac{1}{q}-\frac{1}{2}}|b_3|^{-\frac{1}{2q}}\|f(x)\|_p\\
\end{eqnarray*}
where last equality follows from (\ref{evenoddfun}).\\
Which completes the proof
\end{proof}

%\begin{definition}\label{def entropy}{\bf (R\'{e}nyi entropy)} Let $P$ be the  probability density function, then entropy of $P$ on $\mathbb R^3$ is defined by
%\begin{equation}
%I_\alpha=\frac{1}{1-\alpha}\ln\left(\int_{\mathbb R^3}[P(x)]^\alpha dt\right)\qquad \alpha\ge1.
%\end{equation}
%The R\'{e}nyi entropy leads to the Shannon entropy if $\alpha$ tends $1,$  which is defined as
%\begin{equation}
%S(P)=-\int_{\mathbb R^3}\ln[P(x)]P(x)dx.
%\end{equation}
%We assume that $\|f\|^2_2=2\pi|b_3|,$ therefore from theorem \ref{thenergy}, $\|\mathcal O^{A_1,A_2,A_3}_{\mu_1,\mu_2,\mu_3}\|^2_2$ could be thought as a probability density function.Let us generalize the uncertainty principle of the entropy for the $\mathbb O-$OLCT.
%\end{definition}

\subsection{Local uncertainty principle for  $\mathbb O-$OLCT}\ \\
 Local uncertainty principle states that if $f$ is highly localized, then the octonion Fourier transform can not
 be concentrated in a small neighborhood of two or more separated points. We shall establish Local uncertainty principle for  $\mathbb O-$OLCT in this subsection.

\begin{lemma}[Local uncertainty principle for QOLCT]\label{lem local qolct}\cite{ucp 2sd qolct}\ \\
 (1) For $0<\alpha<1$ and for all $f\in L^2(\mathbb R^2,\mathbb H),$  there is a constant $M_\alpha$ and all measurable set $E\subset \mathbb R^3$ that holds
\begin{equation}\label{eqn local qolct}
\int_{bE}|\mathcal O^{\mu_1,\mu_2}_{A_1,A_2}\{f\}(w)|^2dw\le M_\alpha|E|^\alpha\||x|^\alpha f\|^2_2.
\end{equation}
(2) If $\alpha >1$,and for all $f\in L^2(\mathbb R^2,\mathbb H),$  there is a constant $M_\alpha$ and all measurable set $E\subset \mathbb R^3$ that holds
\begin{equation}\label{eqn local qolct 1}
\int_{bE}|\mathcal O^{\mu_1,\mu_2}_{A_1,A_2}\{f\}(w)|^2dw\le M_\alpha|b_1b_2|^{\alpha-\frac{1}{\alpha}}|E|^\alpha\|f\|^{2-2\alpha}_2\||x|^\alpha f\|^{\frac{2}{\alpha}}_2,
\end{equation}
\begin{equation}
 M_\alpha=\left\{\begin{array}{cc} \dfrac{(1+\alpha^2)}{\alpha^{2\alpha}}(2-2\alpha)^{\alpha-2}, &0<\alpha <1,\\\\ \dfrac{\pi}{\alpha \Gamma(1/2)}\Gamma\left(\frac{1}{\alpha}\right)\Gamma\left(1-\frac{1}{\alpha}\right)(\alpha-1)^\alpha\left(1-\frac{1}{\alpha}\right)^{-1}, & \alpha >1 ,\\\\
\end{array}\right.
\end{equation}
\end{lemma}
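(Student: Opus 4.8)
The plan is to reduce the QOLCT to the (two‑sided) quaternion Fourier transform by the chirp‑modulation identity already used in Lemma~\ref{lemrelation}, and then to run the classical Faris--Price localization argument on the Fourier side. Writing the kernels $K_{A_1}^{\mu_1},K_{A_2}^{\mu_2}$ in the factored form of that lemma one obtains, for $f\in L^2(\mathbb{R}^2,\mathbb{H})$, an identity of the shape
\[
\bigl|\mathcal O^{\mu_1,\mu_2}_{A_1,A_2}\{f\}(w)\bigr|\;=\;\frac{1}{2\pi\sqrt{|b_1b_2|}}\,\bigl|\mathcal F_{\mu_1,\mu_2}\{\tilde f\}(\eta)\bigr|,\qquad \eta=\Bigl(\tfrac{w_1}{2\pi b_1},\tfrac{w_2}{2\pi b_2}\Bigr),
\]
where $\tilde f(x)=f(x)\,e^{\mu_1(a_1x_1^2+2x_1\tau_1)/(2b_1)}e^{\mu_2(a_2x_2^2+2x_2\tau_2)/(2b_2)}$ is $f$ multiplied by unimodular quaternionic chirps (up to the precise left/right placement dictated by the two‑sided convention). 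Since $|e^{\mu r}|=1$ for real $r$ and $|pq|=|p||q|$ in $\mathbb{H}$, we have $|\tilde f(x)|=|f(x)|$ pointwise, hence $\|\tilde f\|_2=\|f\|_2$ and $\bigl\||x|^\alpha\tilde f\bigr\|_2=\bigl\||x|^\alpha f\bigr\|_2$. Substituting $w=2\pi(b_1\eta_1,b_2\eta_2)$ turns $\int_{bE}|\mathcal O^{\mu_1,\mu_2}_{A_1,A_2}\{f\}|^2\,dw$ into $\int_{E'}|\mathcal F_{\mu_1,\mu_2}\{\tilde f\}|^2\,d\eta$ over a dilate $E'$ of $E$, the powers of $2\pi$ and $|b_1b_2|$ appearing in (\ref{eqn local qolct})--(\ref{eqn local qolct 1}) being exactly the Jacobian and kernel‑normalization factors. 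It therefore suffices to prove the two bounds for $\mathcal F_{\mu_1,\mu_2}$ itself.

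For the QFT I would run the classical split. Fix a measurable $E$ and $R>0$ and write $f=f_1+f_2$ with $f_1=f\chi_{\{|x|\le R\}}$, $f_2=f\chi_{\{|x|>R\}}$. Using $\|\mathcal F_{\mu_1,\mu_2}\{g\}\|_{L^\infty}\le\|g\|_1$ and Plancherel's theorem,
\[
\int_E\bigl|\mathcal F_{\mu_1,\mu_2}\{f\}\bigr|^2\;\le\;2|E|\,\|f_1\|_1^2+2\,\|f_2\|_2^2 .
\]
For $0<\alpha<1$ the weight $|x|^{-2\alpha}$ is integrable near the origin in $\mathbb{R}^2$, so Cauchy--Schwarz gives $\|f_1\|_1\le\bigl(\int_{|x|\le R}|x|^{-2\alpha}\,dx\bigr)^{1/2}\bigl\||x|^\alpha f\bigr\|_2$ with $\int_{|x|\le R}|x|^{-2\alpha}\,dx=\tfrac{\pi}{1-\alpha}\,R^{2-2\alpha}$, while $\|f_2\|_2^2\le R^{-2\alpha}\bigl\||x|^\alpha f\bigr\|_2^2$; minimizing the resulting bound $2\bigl(\tfrac{\pi}{1-\alpha}|E|\,R^{2-2\alpha}+R^{-2\alpha}\bigr)\bigl\||x|^\alpha f\bigr\|_2^2$ over $R$ (the optimum is $R\sim|E|^{-1/2}$) produces the claimed $|E|^\alpha$‑estimate with an explicit constant. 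For $\alpha>1$ the weight is no longer locally integrable, so I would first prove the auxiliary interpolation inequality $\|f\|_1\le c_\alpha\,\|f\|_2^{1-1/\alpha}\bigl\||x|^\alpha f\bigr\|_2^{1/\alpha}$ (split $\|f\|_1=\int_{|x|\le\rho}|f|+\int_{|x|>\rho}|f|$, Cauchy--Schwarz on each piece, optimize in $\rho$; this is where the factor $\Gamma(1/\alpha)\Gamma(1-1/\alpha)$ is born) and then apply the cruder bound $\int_E|\mathcal F_{\mu_1,\mu_2}\{f\}|^2\le|E|\,\|f\|_1^2\le c_\alpha^2\,|E|\,\|f\|_2^{2-2/\alpha}\bigl\||x|^\alpha f\bigr\|_2^{2/\alpha}$, after which the stated $M_\alpha$ is obtained by collecting constants.

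The only genuinely quaternionic ingredients are the elementary identity $|pq|=|p||q|$ and the two‑sided Plancherel theorem for $\mathcal F_{\mu_1,\mu_2}$ (both already invoked in the norm‑split computations of Section~\ref{sec 3}); granting these, the localization argument is word‑for‑word the real/complex one, and the non‑commutativity of the two modulation factors is harmless since it never affects moduli. Accordingly I expect the substantive work to lie not in the analysis but in the constants: recovering the exact $M_\alpha$ of (\ref{eqn local qolct})--(\ref{eqn local qolct 1}), especially the $\alpha>1$ value with its $\Gamma(1/\alpha)\Gamma(1-1/\alpha)$ factor, requires carrying the sharp form of the auxiliary $L^1$‑interpolation inequality and the precise value of $\int_{|x|\le R}|x|^{-2\alpha}\,dx$ through the optimization, together with careful bookkeeping of the $|b_1b_2|$‑powers generated by the QOLCT‑to‑QFT reduction, rather than settling for an unnamed constant.
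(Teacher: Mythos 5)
The paper offers no proof of this lemma at all: it is imported verbatim from \cite{ucp 2sd qolct} and used as a black box in Theorem \ref{th local oolct}, so there is no internal argument to compare yours against. Your reconstruction is the standard Faris--Price localization argument, and it is the right one: reduce the QOLCT to the quaternion Fourier transform by stripping off unimodular chirps, split $f$ at a radius $R$, bound the localized part by $|E|\,\|f_1\|_1^2$ via $\|\mathcal F\{g\}\|_\infty\le\|g\|_1$ and the tail by Plancherel, then optimize in $R$ for $0<\alpha<1$, or route through the interpolation inequality $\|f\|_1\le c_\alpha\|f\|_2^{1-1/\alpha}\||x|^\alpha f\|_2^{1/\alpha}$ for $\alpha>1$. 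Two caveats. First, the chirp reduction needs more care than you allow under the convention of (\ref{dQLCT}) used in this paper, where \emph{both} kernels sit to the right of $f$: the $w_1$-dependent phase $e^{\mu_1(\cdot)}$ is then trapped between the $\mu_1$- and $\mu_2$-exponentials inside the integral and cannot be discarded by taking moduli, since $\mu_1$ and $\mu_2$ anticommute; the clean reduction requires the genuinely two-sided placement used in \cite{ucp 2sd qolct}, or a component-wise (even/odd) argument in the spirit of Lemma \ref{lemclosed}. Second, and more substantively, your $\alpha>1$ argument produces a bound of the form $M_\alpha|E|\,\|f\|_2^{2-2/\alpha}\||x|^\alpha f\|_2^{2/\alpha}$, homogeneous of degree $2$ in $f$, whereas the inequality as printed carries $|E|^\alpha\|f\|_2^{2-2\alpha}$; the printed exponents fail to scale correctly (they sum to $2$ only at $\alpha=1$), so what your method yields is the correct statement and the displayed one contains typographical errors --- you should flag this explicitly rather than claim to recover the printed form. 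The stray ``$E\subset\mathbb R^3$'' in a lemma about $f\in L^2(\mathbb R^2,\mathbb H)$ is of the same kind.
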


\begin{theorem}[Local uncertainty principle for  $\mathbb O-$OLCT]\label{th local oolct}\ \\
 (1) For $0<\alpha<1$ and for all $f\in L^2(\mathbb R^3,\mathbb O),$  there is a constant $M_\alpha$ and all measurable set $E\subset \mathbb R^3$ that holds
\begin{equation}\label{eqn local qolct}
\int_{bE}|\mathcal O^{\mu_1,\mu_2}_{A_1,A_2}\{f\}(w)|^2dw\le \frac{1}{2\pi|b_3|}M_\alpha|E|^\alpha\||x|^\alpha f\|^2_2.
\end{equation}
(2) If $\alpha >1$,and for all $f\in L^2(\mathbb R^3,\mathbb O),$  there is a constant $M_\alpha$ and all measurable set $E\subset \mathbb R^3$ that holds
\begin{equation}\label{eqn local qolct 1}
\int_{bE}|\mathcal O^{\mu_1,\mu_2}_{A_1,A_2}\{f\}(w)|^2dw\le \frac{1}{2\pi|b_3|} M_\alpha|b_1b_2|^{\alpha-\frac{1}{\alpha}}|E|^\alpha\|f\|^{2-2\alpha}_2\||x|^\alpha f\|^{\frac{2}{\alpha}}_2,
\end{equation}
\begin{equation}
 M_\alpha=\left\{\begin{array}{cc} \dfrac{(1+\alpha^2)}{\alpha^{2\alpha}}(2-2\alpha)^{\alpha-2}, &0<\alpha <1,\\\\ \dfrac{\pi}{\alpha \Gamma(1/2)}\Gamma\left(\frac{1}{\alpha}\right)\Gamma\left(1-\frac{1}{\alpha}\right)(\alpha-1)^\alpha\left(1-\frac{1}{\alpha}\right)^{-1}, & \alpha >1 ,\\\\
\end{array}\right.
\end{equation}
\end{theorem}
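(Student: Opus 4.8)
The plan is to mirror the template already used for Theorems~\ref{thmpitt} and~\ref{th hdrfoolct}: split the three–dimensional $\mathbb O$-OLCT into the four quaternion offset linear canonical transforms produced by the norm split of Lemma~\ref{lemoctnorm}, apply the QOLCT local uncertainty estimate of Lemma~\ref{lem local qolct} to each of the four pieces, and then reassemble the four bounds with the help of the orthogonal decomposition (\ref{evenoddfun}). Write $f=g+h\mu_4$ with $g,h\in\mathbb H$, and let $g_e,g_o,h_e,h_o$ be the even/odd parts in the variable $x_3$ as in Lemma~\ref{lemoctnorm}. Fixing a measurable set $E\subset\mathbb R^3$ and repeating the manipulation carried out in the proof of Theorem~\ref{thmpitt} (Pitt's inequality) — now with the indicator $\chi_{bE}(w)$ replacing the weight $|w/b|^{-\alpha}$, and using (\ref{o6}) together with the orthogonality of the even and odd parts — I would first record the identity
\begin{multline*}
\int_{bE}\big|\mathcal O^{A_1,A_2,A_3}_{\mu_1,\mu_2,\mu_4}\{f\}(w)\big|^2\,dw
=\frac{1}{2\pi|b_3|}\Big(\int_{bE}|\mathcal O^{A_1,A_2}_{\mu_1,\mu_2}\{g_e\}(w)|^2\,dw
+\int_{bE}|\mathcal O^{A_1,A_2}_{\mu_1,\mu_2}\{g_o\}(w)|^2\,dw\\
+\int_{bE}|\mathcal O^{A_1,A_2}_{\mu_1,\mu_2}\{h_e\}(w)|^2\,dw
+\int_{bE}|\mathcal O^{A_1,A_2}_{\mu_1,\mu_2}\{h_o\}(w)|^2\,dw\Big).
\end{multline*}

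For part~(1), where $0<\alpha<1$, I would apply Lemma~\ref{lem local qolct}(1) to each of the four integrals on the right, obtaining $\int_{bE}|\mathcal O^{A_1,A_2}_{\mu_1,\mu_2}\{g_e\}(w)|^2\,dw\le M_\alpha|E|^\alpha\||x|^\alpha g_e\|_2^2$ and its three analogues, then sum, factor out $M_\alpha|E|^\alpha$, and finally invoke the $|x|^{2\alpha}$–weighted version of (\ref{evenoddfun}), namely $\||x|^\alpha g_e\|_2^2+\||x|^\alpha g_o\|_2^2+\||x|^\alpha h_e\|_2^2+\||x|^\alpha h_o\|_2^2=\||x|^\alpha f\|_2^2$, which holds because the four components remain mutually orthogonal in $L^2(\mathbb R^3,|x|^{2\alpha}\,dx)$. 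Dividing by $2\pi|b_3|$ completes part~(1), with the constant $M_\alpha$ unchanged since it merely factors out of the sum.

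For part~(2), where $\alpha>1$, the first two steps are identical except that Lemma~\ref{lem local qolct}(2) is used, so each component contributes a term of the form $M_\alpha|b_1b_2|^{\alpha-1/\alpha}|E|^\alpha\,\|\cdot\|_2^{2-2\alpha}\||x|^\alpha\cdot\|_2^{2/\alpha}$. The step I expect to be the main obstacle is the reassembly: in contrast with every other uncertainty bound in the paper, the right–hand side is now a sum of \emph{products} of norms, $\sum_{j}\|f_j\|_2^{\,2-2\alpha}\||x|^\alpha f_j\|_2^{\,2/\alpha}$ over $f_j\in\{g_e,g_o,h_e,h_o\}$, so (\ref{evenoddfun}) cannot be used termwise. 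I would control this sum by a Hölder–type inequality applied to the four pairs $\big(\|f_j\|_2^2,\||x|^\alpha f_j\|_2^2\big)$ with the exponents dictated by the powers in Lemma~\ref{lem local qolct}(2), and then collapse the two resulting sums to $\|f\|_2^{2-2\alpha}$ and $\||x|^\alpha f\|_2^{2/\alpha}$ via (\ref{evenoddfun}) and its weighted counterpart; substituting back into the displayed identity and absorbing $1/(2\pi|b_3|)$ yields the claimed estimate. Should this Hölder step prove too lossy with the exponents as stated, the fallback is to argue componentwise from the start, i.e.\ to rerun the proof of Lemma~\ref{lem local qolct} directly for the $\chi_{bE}$–restricted $\mathbb O$-OLCT using the closed–form representation of Lemma~\ref{lemclosed}.
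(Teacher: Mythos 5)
Your treatment of part (1) is essentially the paper's proof: the paper also reduces to the QOLCT local inequality of Lemma \ref{lem local qolct}(1) and recombines the weighted norms, the only cosmetic difference being that it groups the four parity components into two quaternion-valued signals $f_m=g_e+h_o e_2$ and $f_n=h_e-g_o e_2$ (so that $\||x|^\alpha f_m\|_2^2+\||x|^\alpha f_n\|_2^2=\||x|^\alpha f\|_2^2$) rather than keeping all four of $g_e,g_o,h_e,h_o$; either grouping works there because the right-hand side of the $0<\alpha<1$ estimate is additive across an orthogonal decomposition.

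For part (2) your main plan has a genuine gap, and although you correctly locate the obstacle, your proposed repair does not work. For $\alpha>1$ the exponent $2-2\alpha$ is negative, so the quantity $N(f_j):=\|f_j\|_2^{2-2\alpha}\||x|^\alpha f_j\|_2^{2/\alpha}$ is not subadditive over an orthogonal splitting: replacing one component $f_j$ by $\epsilon f_j$ multiplies its term by $\epsilon^{2(1-\alpha)+2/\alpha}$, which tends to $+\infty$ as $\epsilon\to 0$ (for $\alpha=2$ the exponent is $-1$), while the target quantity $\|f\|_2^{2-2\alpha}\||x|^\alpha f\|_2^{2/\alpha}$ stays bounded. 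Hence $\sum_j N(f_j)\le C\,N(f)$ is false in general, and no H\"older inequality ``with the exponents dictated by the powers'' can rescue it, since $(1-\alpha)+\tfrac{1}{\alpha}\neq 1$ — these are not conjugate exponents. Your fallback is the correct route: for $\alpha>1$ one must rerun the local-uncertainty argument directly at the $\mathbb O$-OLCT level (split $f$ into $f\chi_{\{|x|<r\}}+f\chi_{\{|x|\ge r\}}$, bound the two pieces using the $\mathbb O$-OLCT analogues of the $L^1\to L^\infty$ and Plancherel estimates from Theorem \ref{th hdrfoolct} and Theorem \ref{thenergy}, and optimize over $r$), rather than summing component-wise bounds. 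Be aware that the paper itself does not prove part (2) either — it proves only part (1) and dismisses part (2) with ``Similarly we can easily prove'' — so the difficulty you flagged is a step the paper leaves unjustified.
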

\begin{proof}
 By the splitting of  $\mathbb O-$OLCT in (\ref{qolct}),we have
\begin{eqnarray*}
\nonumber\mathcal O^{A_1,A_2,A_3}_{\mu_1,\mu_2,\mu_4}\{f\}(w)&=&\frac{1}{\sqrt{2\pi|b_3|}}\int_{\mathbb R^3}g_e(x)K_{A_1}^{\mu_1}(x_1,w_1)K_{A_2}^{\mu_2}(x_2,w_2)c_3dx\\
\nonumber&&+\frac{1}{\sqrt{2\pi|b_3|}}\int_{\mathbb R^3}h_o(x)K_{A_1}^{-\mu_1}(x_1,w_1)K_{A_2}^{-\mu_2}(x_2,w_2)s_3dx\\
\nonumber&&+\left(\frac{1}{\sqrt{2\pi|b_3|}}\int_{\mathbb R^3}h_e(x)K_{A_1}^{-\mu_1}(x_1,w_1)K_{A_2}^{-\mu_2}(x_2,w_2)c_3dx\right.\\
\nonumber&&-\left.\frac{1}{\sqrt{2\pi|b_3|}}\int_{\mathbb R^3}g_o(x)K_{A_1}^{\mu_1}(x_1,w_1)K_{A_2}^{\mu_2}(x_2,w_2)s_3dx\right)\mu_4.\\
\end{eqnarray*}
Setting $f_m(x)=g_e(x_{+++})+h_o(x_{+-+})e_2$ and $f_n(x)x=h_e(x_{+++})-g_o(x_{+-+})e_2$. Then $\mathbb O-$OLCT can be written as the combination of two QOLCTs as
\begin{eqnarray}\label{oolct fm fn}
\nonumber\|\mathcal O^{A_1,A_2,A_3}_{\mu_1,\mu_2,\mu_4}\{f\}(w)\|^2&=&\frac{1}{2\pi|b_3|}\|\mathcal O^{A_1,A_2}_{\mu_1,\mu_2}\{f_m\}(w)\|^2+\|\mathcal O^{A_1,A_2}_{\mu_1,\mu_2}\{f_n\}(w)\|^2\\
\end{eqnarray}
Now, by lemma \ref{lem local qolct} we have
\begin{equation}\label{www}
\int_{bE}|\mathcal O^{\mu_1,\mu_2}_{A_1,A_2}\{f_j\}(w)|^2dw\le M_\alpha|E|^\alpha\||x|^\alpha f_j\|^2_2,\quad j=m,n
\end{equation}
Therefore from (\ref{oolct fm fn}),(\ref{www}) and using lemma \ref{lem1},we have
\begin{eqnarray*}
\int_{bE}\|O^{A_1,A_2,A_3}_{\mu_1,\mu_2,\mu_4}\{f\}(w)\|^2dw&=&\frac{1}{2\pi|b_3|}\int_{bE}\left(\|\mathcal O^{A_1,A_2}_{\mu_1,\mu_2}\{f_m\}(w)\|^2dw+\|\mathcal O^{A_1,A_2}_{\mu_1,\mu_2}\{f_n\}(w)\|^2dw\right)\\
&\le&\frac{1}{2\pi|b_3|}M_\alpha|E|^\alpha\left(\||x|^\alpha f_m\|^2+\||x|^\alpha f_n\|^2_2 \right)\\
&=&\frac{1}{2\pi|b_3|}M_\alpha|E|^\alpha\||x|^\alpha f\|^2_2 \\
\end{eqnarray*}
which completes the proof.\\
Similarly we can easily prove (\ref{eqn local qolct 1}).
\end{proof}

\section{Conclusions}\ \\

In this paper, based on the association between the $\mathbb O-$OLCT the QOLCT via split norm, we have  established some basic properties of the proposed transform including the inversion formula and energy conservation. These
results are very important for their applications in digital signal and image processing. Finally,  the uncertainty
 inequalities for the  $\mathbb O-$OLCT such as logarithmic uncertainty inequality,Hausdorff-Young inequality and  local uncertainty
 are obtained.
In
our future works, we will discuss the physical significance and engineering background of this paper. Moreover, we will formulate convolution and correlation theorems for the $\mathbb O-$OLCT.

\end{document}